\def\BF{\operatorname{BF}}
\def\sem{_{\textup{sem}}}
\def\syn{_{\textup{syn}}}
\def\id{\operatorname{id}}
\let\set\mathbb
\def\qed{\rule{1ex}{1ex}}
\def\saucy{\textsf{Saucy}\xspace}
\def\dqsym{\textsf{dqsym}\xspace}
\def\dqbdd{\textsf{DQBDD}\xspace}
\def\pedant{\textsf{Pedant}\xspace}
\def\hqs{\textsf{HQS}\xspace}
\def\depqbf{\textsf{DepQBF}\xspace}
\def\caqe{\textsf{Caqe}\xspace}
\begin{document}

\title{Symmetries of Dependency Quantified Boolean Formulas
\thanks{Parts of this work have been supported by the LIT AI Lab funded by the state of 
Upper Austria and by the Austrian Science Fund (FWF) [10.55776/COE12].
}
}

 \author{Clemens Hofstadler\inst{1} \and
	Manuel Kauers\inst{2} \and
	Martina Seidl\inst{1}}

\institute{Institute for Symbolic Artificial Intelligence,
		Johannes Kepler University, Linz, Austria\\
		\email{\{clemens.hofstadler, martina.seidl\}@jku.at}
\and
		Institute for Algebra, 
		Johannes Kepler University, Linz, Austria\\
		\email{manuel.kauers@jku.at}
		}

\maketitle

\begin{abstract}
Symmetries have been exploited successfully within the realms of SAT and QBF to improve solver performance in practical applications and to devise more powerful proof systems.
As a first step towards extending these advancements to the class of dependency quantified Boolean formulas (DQBFs), which generalize QBF by allowing more nuanced variable dependencies, this work develops a comprehensive theory to characterize symmetries for DQBFs. 
We also introduce the notion of symmetry breakers of DQBFs, along with a concrete construction, and discuss how to detect DQBF symmetries algorithmically using a graph-based approach.
Moreover, we empirically study the presence of symmetries in benchmark formulas and their impact on solving times. 
\end{abstract}


\section{Introduction}
\label{intro}

Symmetry is an omnipresent phenomenon that we encounter in different forms in all parts of our lives.
From the double helix structure of DNA (exhibiting two-fold rotation symmetry) on the microscopic scale to the rotational 
symmetry of galaxies on the cosmic scale.
Symmetries also play a crucial role in automated reasoning, where symmetries of problem instances can be used to simplify the solving process.
In practical applications, they can be used to incorporate additional constraints into a problem, which guide a solver away from equivalent parts of the search space, accelerating the search~\cite{ASM06,AJS07a,KS18}. 
On the theoretical side, symmetries can enhance proof systems by introducing new deduction rules that exploit symmetries, ultimately resulting in exponentially more powerful proof systems~\cite{Kri85,Urq99,KS18a}.
Recent research has also focused on developing proof logging methods that can efficiently handle symmetry-reasoning~\cite{TD20,bogaerts2023certified}.

Such symmetry breaking techniques rely on a solid theoretical foundation for describing and understanding symmetries for different problem classes.
This theory has been developed most prominently for the propositional satisfiability problem (SAT)~\cite{Sak21} and for constraint satisfaction problems (CSP)~\cite{GPP06}.
Two of the authors have also developed a theory of symmetries for quantified Boolean formulas (QBFs)~\cite{KS18}, extending earlier work on the subject~\cite{AMS04,AJS07a,AJS07b}.
In this work, we generalize the theory from~\cite{KS18} from QBFs to dependency quantified Boolean formulas.

\emph{Dependency quantified Boolean formulas (DQBFs)}~\cite{PRA01,Bub09} represent a rich and expressive class of logical formulas that extends QBF by allowing 
existentially quantified variables to depend on specific subsets of universally quantified variables. 
In contrast to QBFs, which can only encode linear dependencies between variables, the nuanced quantification of DQBFs allows also for non\nobreakdash-linear dependencies. 
This makes DQBFs a potent framework for encoding a variety of problems in verification, synthesis, and soft-/hardware engineering, see~\cite{SW18,CHLT22} and references therein. 
The extended expressive power, however, comes at the cost of increased computational complexity -- the decision problem for DQBFs is \textsf{NEXPTIME}-complete~\cite{PRA01}.
This necessitates a need for advanced methods for solving DQBF instances efficiently.
One promising avenue for mitigating the inherent complexity in practice is the exploitation of symmetries. 


In this work, we develop a comprehensive and explicit theory of symmetries for DQBFs, generalizing concepts established for SAT and QBF.
In particular, analogous to the case of QBF~\cite{KS18} (and CSP~\cite{CJJPS06}), we distinguish between two kinds of symmetries:
those of the problem itself, which we call \emph{syntactic symmetries}, and those of the solutions, which we call \emph{semantic symmetries}.
All required concepts will be recalled, and we provide rigorous proofs of all our results. 

One way to exploit symmetries in practice is to extend a given formula with additional constraints that destroy
the formula's symmetries and thereby guide a solver away from equivalent areas of the search space.
This approach is called \emph{(static) symmetry breaking} and the formula encoding the additional constraints is called a \emph{symmetry breaker}.
In this work, we introduce the notion of \emph{(conjunctive) symmetry breakers} for DQBFs and 
we provide a concrete construction for such symmetry breakers, generalizing ideas from SAT~\cite{symmetry-SAT} and QBF~\cite[Sec.~8]{KS18}.  
We also describe how to detect symmetries in DQBFs algorithmically with the help of graph-theoretic methods.

Finally, we empirically study the presence of symmetries in benchmark formulas and their impact on solving times.
To this end, we have developed the tool \dqsym\footnote{Available at \url{https://github.com/marseidl/dqsym}} for computing symmetries of both QBFs and DQBFs in prenex conjunctive normal form.
We apply \dqsym to QBFs and DQBFs from the QBFGallery 2023\footnote{\url{https://qbf23.pages.sai.jku.at/gallery/}}
and evaluate the effect of the constructed symmetry breakers across different solvers.  
Our experiments show that symmetry breaking can significantly influence runtime, sometimes positively but sometimes also negatively.

This work extends the symmetry framework for quantified Boolean formulas 
that was presented at 
the SAT 2018 conference~\cite{KS18} to the more general case of 
dependency quantified Boolean formulas. 

\section{Dependency Quantified Boolean Formulas}

Let $X=\{x_1,\dots,x_n\}$ and $Y=\{y_1,\dots,y_k\}$ be two finite disjoint sets of propositional variables.
For $V \subseteq X \cup Y$, we denote by $\BF(V)$ a set of \emph{(propositional) Boolean formulas}
over the variables $V$. 
The set $\BF(V)$ contains all well-formed formulas built from the truth constants $\top$ (true) and $\bot$
(false), from the variables in $V$, and from logical connectives according to the standard grammar of propositional logic.
We note that we make no restrictions on the syntactic structure of the elements in $\BF(V)$
(except for Section~\ref{sec:detection}, where we restrict to formulas in conjunctive normal form).
Boolean formulas will be denoted by lowercase Greek letters~$\phi, \psi, \dots$.

An \emph{assignment} for a set of variables $V \subseteq X \cup  Y$ is a function $\sigma\colon V\to\{\top,\bot\}$.
The set of all assignments for $V$ is denoted by $\set A(V)$.
We assume a well-defined semantics for the logical connectives used to construct the Boolean formulas in $\BF(V)$.
In particular, we use the typical operations $\lnot$ (negation), $\wedge$ (conjunction), $\vee$ (disjunction), $\leftrightarrow$ (equivalence),
$\rightarrow$ (implication), and $\oplus$ (xor) with their standard semantics. 
Then, every assignment $\sigma$ extends naturally to a function $[ \cdot ]_{\sigma}\colon \BF(V)\to\{\top,\bot\}$, mapping
every Boolean formula $\phi \in \BF(V)$ to its \emph{truth value} $[\phi]_{\sigma} \in \{\top, \bot\}$ under~$\sigma$.

A \emph{quantified Boolean formula (QBF)} (in prenex form) on a set of variables $V = \{v_{1},\dots,v_{m}\}$ is a formula of the form
	$Q_{1} v_{1} Q_{2} v_{2} \dots Q_{m} v_{m} . \phi$,
with quantifiers $Q_{1},\dots,Q_{m} \in \{\forall, \exists\}$ and $\phi \in \BF(V)$.
In a QBF, if a variable $v_{i}$ is existentially quantified, i.e., $Q_{i} = \exists$,
then $v_{i}$ depends semantically on all universally quantified variables $v_{j}$ with $j < i$. 
This leads to a linear dependency structure of the variables. 

\emph{Dependency quantified Boolean formulas (DQBFs)}~\cite{PRA01} generalize QBFs by allowing \emph{non-linear} dependencies of the variables, see also~\cite[Ch.~4]{Bub09} for an introduction.
These dependencies are specified by explicitly annotating each existential variable with a set of universal variables.
This is formalized by considering, for any $k$ subsets $D_{1},\dots,D_{k} \subseteq X$, a \emph{prefix} for $X$ and $Y$ 
of the form $\forall x_1,\dots,x_n \exists y_1(D_1),\dots, y_k(D_k)$.
The set $D_{i}$ encodes that the existential variable $y_{i}$ only depends on the universal variables in $D_{i}$ 
and is called the \emph{dependency set} of $y_{i}$.


\begin{definition}
Given a prefix $P = \forall x_1,\dots,x_n \exists y_1(D_1),\dots, y_k(D_k)$ for $X$ and~$Y$ with dependency sets $D_{1},\dots,D_{k} \subseteq X$
and a Boolean formula $\phi \in \BF(X~\cup~Y)$, the formula
\begin{align*}
	P. \phi = \forall x_1,\dots,x_n \exists y_1(D_1),\dots, y_k(D_k) . \phi
\end{align*}
is called a \emph{dependency quantified Boolean formula (DQBF)}.
\end{definition}

We will denote DQBFs by uppercase Greek letters $\Phi,\Psi,\dots$.
Note that, by definition, DQBFs are always closed formulas, meaning that 
each variable in $X \cup Y$ is quantified in the prefix.

\begin{example}
An example of a DQBF is
\begin{align*}
	\forall x_{1}, x_{2} \exists y_{1} \big(\{x_{1}\}\big), y_{2} \big(\{x_{2}\}\big) . \left(\lnot x_{1} \rightarrow y_{1}\right) \wedge \left(x_{2} \vee y_{2}\right).
\end{align*}
Note that this formula cannot be written as a QBF (in prenex form) because the quantifier dependencies cannot be expressed linearly.
Conversely, however, every QBF can be expressed as a suitable DQBF.
For example, any QBF of the form
	$\forall x_{1} \exists y_{1} \forall x_{2} \exists y_{2} . \phi$,
with $\phi \in \BF(\{x_{1},x_{2},y_{1},y_{2}\})$ can be expressed as
	$\forall x_{1}, x_{2} \exists y_{1}\big(\{x_{1}\}\big), y_{2} \big(\{x_{1},x_{2}\}\big) . \phi$.
Note that the linear dependency structure of the QBF causes the dependency sets of the corresponding DQBF to form an increasing sequence. 
This is the case for every DQBF that arises from a QBF (in prenex form).
\end{example}

For a prefix $P = \forall x_1,\dots,x_n\exists y_1(D_1),\dots,y_k(D_k)$, an \emph{interpretation} for~$P$
is a tuple $s = (s_{1},\dots,s_{k})$ of functions $s_{i}\colon \{\top,\bot\}^{|D_i|}\to\{\top,\bot\}$, for $i = 1,\dots, k$.
Each function $s_{i}$ specifies the truth value of the existential variable $y_{i}$ in dependence of the truth values of 
the universal variables in $D_{i}$. 
The functions $s_{i}$ are called \emph{Skolem functions}.
We denote by $\set S(P)$ the set of all interpretations for $P$.

\begin{remark}\label{rem:skolem-function-as-formula}
Every Skolem function $s_{i}\colon \{\top,\bot\}^{|D_i|}\to\{\top,\bot\}$ with 
dependency set $D_i=\{x_{i_1},\dots,x_{i_d}\}$ ($i_1<\dots<i_d$) 
can be represented by a Boolean 
formula $\phi_{i} \in \BF(D_{i})$, so that, for every assignment $\sigma\in \set A(D_{i})$, 
\begin{align*}
	s_{i}( \sigma(x_{i_1}),\dots,\sigma(x_{i_d}))  = [\phi_{i}]_{\sigma}.
\end{align*}
Therefore, an interpretation can be represented as a tuple of such Boolean formulas. 
In the following, we will represent interpretations in this way.
\end{remark}

\begin{example}
\label{ex:interpretations}
Consider the prefix $P = \forall x_{1},x_{2}\exists y_{1} \big(\{x_{1}\}\big), y_{2}\big(\{x_{2}\}\big)$.
Two possible interpretations of $P$ are $s = (x_{1}, x_{2})$ and $s' = (\top, \lnot x_{2})$.
\end{example}

By evaluating an interpretation $s \in \set S(P)$ under an assignment $\sigma \in \set A(X)$ of the universal variables,
we can extend $\sigma$ to an assignment $\sigma_s \in \set A(X \cup Y)$ of all variables.
We call $\sigma_{s}$ the \emph{induced assignment} of $s$ and $\sigma$.

\begin{definition}\label{def:induced-assignment}
Let $P = \forall x_1,\dots,x_n\exists y_1(D_1),\dots,y_k(D_k)$ be a prefix and 
let $s = (s_{1},\dots,s_{k}) \in \set S(P)$ be an interpretation for $P$.
Furthermore, let $\sigma \in \set A(X)$ be an assignment.
The \emph{induced assignment} of $s$ and $\sigma$ is the assignment $\sigma_{s} \in \set A(X \cup Y)$ 
defined by
\begin{alignat*}{2}
	\sigma_s(x_{i}) &= \sigma(x_{i}) &&\text{ for  } i \in \{1,\dots,n\},\\
	\sigma_{s}(y_{j}) &= s_j(\sigma(x_{i_1}),\dots,\sigma(x_{i_d})) \quad&&\text{ for  } j \in \{1,\dots,k\},
\end{alignat*}
 where $D_j=\{x_{i_1},\dots,x_{i_{d}}\}$ is the dependency set of~$y_j$ and $i_1<\dots<i_d$. 
 \end{definition}

The truth value of a DQBF under an interpretation $s$ can then be obtained by
considering all possible induced assignments of $s$.

\begin{definition}
\label{def:dqbf-truth-value}
Let $\Phi = P.\phi$ be a DQBF and let $s \in \set S(P)$ be an interpretation for the prefix $P$.
The \emph{truth value} of $\Phi$ under $s$ is
\begin{align*}
  	[\Phi]_s = \bigwedge_{\sigma \in \set A(X)} [\phi]_{\sigma_s}.
\end{align*}
The DQBF $\Phi$ is \emph{true} if there exists $s\in\set S(P)$ with $[\Phi]_s=\top$ and it is \emph{false} otherwise.
If $\Phi$ is true, then any interpretation $s$ with $[\Phi]_s=\top$ is called a \emph{model} for $\Phi$. 
\end{definition}

\begin{example}
\label{ex:truth-value-of-DQBF}
Consider the prefix $P = \forall x_{1},x_{2}\exists y_{1} \big(\{x_{1}\}\big), y_{2}\big(\{x_{2}\}\big)$,
and let $s = (x_{1}, x_{2})$ and $s' = (\top, \lnot x_{2})$ be the interpretations for $P$ from Example~\ref{ex:interpretations}. 
Furthermore, let $\Phi = P . \phi$ be the DQBF with
\begin{align*}
	\phi = \left(\lnot x_{1} \rightarrow y_{1}\right) \wedge \left(x_{2} \vee y_{2}\right).
\end{align*}

The truth value of $\Phi$ under $s$ is $[\Phi]_s = \bot$ 
because the induced assignment $\sigma_{s}$ that maps all variables to $\bot$ yields $[\phi]_{\sigma_{s}} = \bot$.
The truth value of $\Phi$ under $s'$ is $[\Phi]_{s'} = \top$.
There are four induced assignments $\sigma_{s'}$ of $s'$ 
and one can check that $[\phi]_{\sigma_{s'}} = \top$ for all of them.
Therefore, we can conclude that the DQBF $\Phi$ is true and that $s'$ is a model for $\Phi$.
\end{example}

The following lemma, which shall prove useful later, follows easily from the definitions above.

\begin{lemma}
\label{lem:truth-of-conjunction}
Let $P$ be a prefix for $X$ and $Y$, and let $\phi, \psi \in \BF(X \cup Y)$.
Then, we have
$[P.(\phi\land\psi)]_s=[P.\phi]_s\land[P.\psi]_s$ for all $s \in \set S(P)$.
\end{lemma}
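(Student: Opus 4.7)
The plan is to prove the identity by straightforward unfolding of Definition~\ref{def:dqbf-truth-value}, together with the standard semantics of propositional conjunction. No induction or clever construction is needed; the lemma is essentially a bookkeeping observation that the conjunction over assignments $\sigma\in\set A(X)$ can be swapped with the internal conjunction inside $\phi\land\psi$.

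Concretely, I would start by applying Definition~\ref{def:dqbf-truth-value} to the left-hand side to obtain
\[
    [P.(\phi\land\psi)]_s \;=\; \bigwedge_{\sigma\in\set A(X)} [\phi\land\psi]_{\sigma_s}.
\]
Next, I would invoke the standard semantics of the connective $\land$ at the level of Boolean formulas, which gives $[\phi\land\psi]_{\sigma_s}=[\phi]_{\sigma_s}\land[\psi]_{\sigma_s}$ for each fixed $\sigma$. Substituting this in yields a conjunction, indexed by $\sigma\in\set A(X)$, of two-factor terms.

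The key (and only nontrivial) step is the regrouping
\[
    \bigwedge_{\sigma\in\set A(X)} \bigl([\phi]_{\sigma_s}\land[\psi]_{\sigma_s}\bigr)
    \;=\;
    \Bigl(\bigwedge_{\sigma\in\set A(X)} [\phi]_{\sigma_s}\Bigr)\land\Bigl(\bigwedge_{\sigma\in\set A(X)} [\psi]_{\sigma_s}\Bigr),
\]
which is valid because $\land$ is associative and commutative, and because the indexing set $\set A(X)$ depends only on the prefix $P$ (in particular on $X$), and is therefore the same on both sides regardless of $\phi$ or $\psi$. A final application of Definition~\ref{def:dqbf-truth-value} identifies the two factors on the right as $[P.\phi]_s$ and $[P.\psi]_s$, completing the argument.

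The main (and only) potential obstacle is conceptual rather than technical: one must be careful that the induced assignment $\sigma_s$ on the extended variable set $X\cup Y$ is determined entirely by $\sigma$ and $s$ and does not depend on the matrix of the DQBF. This is exactly what Definition~\ref{def:induced-assignment} guarantees, so the same $\sigma_s$ can be used to evaluate $\phi$, $\psi$, and $\phi\land\psi$, justifying all substitutions above.
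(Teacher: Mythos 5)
Your proof is correct and is essentially identical to the paper's: unfold Definition~\ref{def:dqbf-truth-value}, distribute the inner $\land$ over the big conjunction using associativity/commutativity, and fold back. The additional remark about $\sigma_s$ depending only on $\sigma$ and $s$ (not on the matrix) is a sound and useful observation, though the paper leaves it implicit.
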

\begin{proof}
By Definition~\ref{def:dqbf-truth-value} and by the semantics of conjunction, we obtain
\begin{align*}
  [P.(\phi\land\psi)]_s&=\bigwedge_{\sigma \in \set A(X)} [\phi\land\psi]_{\sigma_s}
  =\bigwedge_{\sigma \in \set A(X)} \left([\phi]_{\sigma_s}\land[\psi]_{\sigma_s}\right) \\
  &=\bigwedge_{\sigma \in \set A(X)}  [\phi]_{\sigma_s}\land\bigwedge_{\sigma \in \set A(X)} [\psi]_{\sigma_s}
  =[P.\phi]_s\land[P.\psi]_s.~\qed
\end{align*}
\end{proof}

In examples, it can be helpful to visualize interpretations of DQBFs as certain tree-like structures.
An interpretation $s \in \set S(P)$ for a prefix $P = \forall x_1,\dots,x_n\exists y_1(D_1),\dots,y_k(D_k)$ can be visualized as
a rooted tree of height $n + k + 1$ with some additional edges to specify the dependencies.

The nodes in the first $n$ levels of this tree have two children and the edges to these children are labeled by $\top$ and~$\bot$, respectively.
These levels represent the universal variables $x_{1},\dots,x_{n}$ and constitute a complete binary tree. 
Each path in this complete binary tree corresponds to one assignment of the universal variables.

The nodes in the levels $n+1,\dots, n + k$ only have a single child with an edge that is either labelled by $\top$ or by~$\bot$.
These levels represent the existential variables $y_{1},\dots,y_{k}$ and each path through them corresponds to an 
assignment of these variables, depending on the values of the universal variables.
In particular, if $\sigma \in \set A(X)$ is the assignment corresponding to a path from the root through the first $n$ levels,
then the induced assignment $\sigma_s \in \set A(X \cup Y)$ determines the labels of the remaining nodes on the (unique) path to a leaf. 
For example, the induced assignment $\sigma_{s}$ discussed in Example~\ref{ex:truth-value-of-DQBF} corresponds to the leftmost path in the left tree in Example~\ref{ex:interpretation-trees}.

In order to correctly represent the dependencies of the existential variables on (some of) the universal variables,
we introduce additional edges, called \emph{dependency edges}, that connect nodes within one level.
They are constructed as follows:
Given two nodes $v$ and $w$ at level $n + l$ for $l \in \{1,\dots,k\}$, consider the unique paths from the root node $r$ to $v$ and~$w$, respectively, say
\begin{align*}
	r \xrightarrow{a_{1}} \circ \xrightarrow{a_{2}} \circ \dots \circ \xrightarrow{a_{n+l-1}} v
	&& \text{and} &&
	r \xrightarrow{b_{1}} \circ \xrightarrow{b_{2}} \circ \dots \circ \xrightarrow{b_{n+l-1}} w,
\end{align*}
where $a_{i}, b_{i} \in \{\top,\bot\}$ denote the edge labels, for $i = 1,\dots,n+l-1$.
We draw a dependency edge between $v$ and $w$ if and only if $a_{i} = b_{i}$ for all $i$ such that $x_{i} \in D_{l}$.
Informally, we draw a dependency edge between two nodes $v$ and $w$ at level $n + l$ 
if and only if the truth values of all universal variables on which~$y_{l}$ depends are equal on the paths to $v$ and~$w$, respectively.

The semantics of DQBF ensures that if two nodes are connected by a dependency edge, then the outgoing edges to their respective 
child are labelled~equally. 
\begin{example}
\label{ex:interpretation-trees}
Consider the prefix $P = \forall x_{1},x_{2}\exists y_{1} \big(\{x_{1}\}\big), y_{2}\big(\{x_{2}\}\big)$.
The two interpretations $s$ and $s'$ for $P$ given in Example~\ref{ex:interpretations} can be visualized as follows:
\begin{center}
\begin{forest}
for tree={draw, circle, inner sep=2pt, s sep=22pt, l = 23pt}
[, phantom, s sep = 2cm
[{}, name=x1
    [{}, edge label={node[font=\small, midway,left]{$\bot$}}, name=x2
        [{}, edge label={node[font=\small,midway,left,]{$\bot$}}, name=y11
            [{}, edge label={node[font=\small,midway,left]{$\bot$}}, name=y21
               	[{}, edge label={node[font=\small,midway,left]{$\bot$}}]
            ]
        ]
        [{}, edge label={node[font=\small,midway,right]{$\top$}}, name=y12
            [{}, edge label={node[font=\small,midway,right]{$\bot$}}, name=y22
                [{}, edge label={node[font=\small,midway,right]{$\top$}}]
            ]
        ]
    ]
    [{}, edge label={node[font=\small,midway,right]{$\top$}}
        [{}, edge label={node[font=\small,midway,left]{$\bot$}}, name=y13
            [{}, edge label={node[font=\small,midway,left]{$\top$}}, name=y23
                [{}, edge label={node[font=\small,midway,left]{$\bot$}}]
            ]
        ]
        [{}, edge label={node[font=\small,midway,right]{$\top$}}, name=y14
            [{}, edge label={node[font=\small,midway,right]{$\top$}}, name=y24
                [{}, edge label={node[font=\small,midway,right]{$\top$}}]
            ]
        ]
    ]
]
[{}, name=xx1
    [{}, edge label={node[font=\small, midway,left]{$\bot$}}, name=xx2
        [{}, edge label={node[font=\small,midway,left,]{$\bot$}}, name=yy11
            [{}, edge label={node[font=\small,midway,left]{$\top$}}, name=yy21
               	[{}, edge label={node[font=\small,midway,left]{$\top$}}]
            ]
        ]
        [{}, edge label={node[font=\small,midway,right]{$\top$}}, name=yy12
            [{}, edge label={node[font=\small,midway,right]{$\top$}}, name=yy22
                [{}, edge label={node[font=\small,midway,right]{$\bot$}}]
            ]
        ]
    ]
    [{}, edge label={node[font=\small,midway,right]{$\top$}}
        [{}, edge label={node[font=\small,midway,left]{$\bot$}}, name=yy13
            [{}, edge label={node[font=\small,midway,left]{$\top$}}, name=yy23
                [{}, edge label={node[font=\small,midway,left]{$\top$}}]
            ]
        ]
        [{}, edge label={node[font=\small,midway,right]{$\top$}}, name=yy14
            [{}, edge label={node[font=\small,midway,right]{$\top$}}, name=yy24
                [{}, edge label={node[font=\small,midway,right]{$\bot$}}]
            ]
        ]
    ]
]
]
\node [above=3pt of x1] {$s$};
\node [above=3pt of xx1] {$s'$};
\node (p) [left=15pt of y21] {$y_{2}$:};
\node at (p |- y11) {$y_{1}$:};
\node at (p |- x2) {$x_{2}$:};
\node at (p |- x1) {$x_{1}$:};
\path (y11) edge[dashed, bend right=20]  (y12);
\path (y13) edge[dashed, bend right=20]  (y14);
\path (y21) edge[dashed, bend right=20]  (y23);
\path (y22) edge[dashed, bend right=20]  (y24);
\path (yy11) edge[dashed, bend right=20]  (yy12);
\path (yy13) edge[dashed, bend right=20]  (yy14);
\path (yy21) edge[dashed, bend right=20]  (yy23);
\path (yy22) edge[dashed, bend right=20]  (yy24);
\end{forest}
\end{center}

The dependency edges are depicted by dashed edges.
Note that whenever two nodes are connected by a dependency edge, 
the outgoing edges to their respective child are labelled equally. 
The converse, however, need not hold.
Nodes that are not connected by a dependency edge can still have 
outgoing edges with the same label (as witnessed by the tree on the right).
\end{example}

\section{Symmetries of DQBFs}

Symmetries of an object can be described formally using bijective functions.
When studying symmetries of DQBFs, we distinguish, like in the case of QBFs~\cite{KS18}, between \emph{syntactic symmetries} and \emph{semantic symmetries}.
The former concern transformations of the syntactic structure of a formula
and arise from bijective functions on $\BF(X \cup Y)$, which transform formulas into other formulas.
Semantic symmetries, on the other hand, concern the semantics of a formula 
and arise from bijective functions on $\set S(P)$, which transform interpretations of a prefix $P$ into other interpretations.
In both cases, we are interested in functions that preserve models of a given DQBF.

To treat symmetries appropriately, we will require some (basic) terminology from group theory, which we recall in Section~\ref{sec:groups}.
Readers already familiar with group theory can safely skip this section.
After that, we first discuss syntactic symmetries in Section~\ref{sec:syntactic-symmetries}, before treating semantic symmetries in Section~\ref{sec:semantic-symmetries}

\subsection{A Primer on Groups}
\label{sec:groups}

A group is a set $G$ equipped with a binary associative operation $\circ\colon G \times G \to G$, such 
that $G$ contains a neutral element and such that every element in $G$ also has an inverse in $G$.
A prototypical example of a group is the set of integers $\set Z$ together with addition as the binary operation.

Another important example of a group, one particularly relevant for describing symmetries, is the \emph{symmetric group} $S_{n}$.
For any fixed $n \in \set N$, the symmetric group $S_{n}$ is the set of all bijective functions $\pi\colon \{1,\dots,n\} \to \{1,\dots,n\}$
together with function composition as the binary operation. 
The elements in $S_{n}$ are called \emph{permutations}.
A permutation $\pi \in S_{n}$ can be conveniently denoted as a two dimensional array with two rows and $n$ columns 
$\pi = \begin{psmallmatrix}
1 & 2 & \cdots & n \\
\pi(1) & \pi(2) & \cdots &\pi(n)
\end{psmallmatrix}$.
Permutations lend themselves nicely to describing symmetries of (geometric)~objects.

\begin{example}
\label{ex:square}
Consider the following square with vertices labelled $1, 2, 3, 4$.
\begin{center}
\begin{tikzpicture}
  \draw (0, 0) rectangle (1,1);
  \path
    (0, 1) node[above left=-2pt,font=\small] {$1$}
    (1, 1) node[above right=-2pt,font=\small] {$2$}
    (1, 0) node[below right=-2pt,font=\small] {$3$}
    (0, 0) node[below left=-2pt,font=\small] {$4$};    
\end{tikzpicture}
\end{center}
We can use permutations $\pi \in S_{4}$ to shuffle around the vertices, moving each vertex $v$ to position $\pi(v)$, while keeping the edges connected as they are.
For example, the permutation 
$\pi = \begin{psmallmatrix}
1 & 2 & 3 & 4 \\
2 & 3 & 4 & 1
\end{psmallmatrix}$ 
rotates the square by 90 degrees clockwise, leaving the relative order of the vertices unchanged (see the left part of Figure~\ref{fig:square}).
The permutation
$\sigma = \begin{psmallmatrix}
1 & 2 & 3 & 4 \\
1 & 2 & 4 & 3
\end{psmallmatrix}$, on the other hand, changes the relative order of the vertices, so that the resulting figure is no longer a square (see the right part of Figure~\ref{fig:square}).    
Since $\pi$ transforms the square into another square, it describes a symmetry.
The permutation $\sigma$, on the other hand, does~not.

\begin{figure}[h]
\centering
\begin{tikzpicture}
  \draw (0, 0) rectangle (1,1);
  \path
    (0, 1) node[above left=-2pt,font=\small] {$4$}
    (1, 1) node[above right=-2pt,font=\small] {$1$}
    (1, 0) node[below right=-2pt,font=\small] {$2$}
    (0, 0) node[below left=-2pt,font=\small] {$3$};    
  \draw[<-] (1.5,0.5) -- (2.5,0.5) node[midway,above] {$\pi$};
  \draw (3, 0) rectangle (4,1);
  \path
    (3, 1) node[above left=-2pt,font=\small] {$1$}
    (4, 1) node[above right=-2pt,font=\small] {$2$}
    (4, 0) node[below right=-2pt,font=\small] {$3$}
    (3, 0) node[below left=-2pt,font=\small] {$4$};   
   \draw[->] (4.5,0.5) -- (5.5,0.5) node[midway,above] {$\sigma$};
  \draw (6,1) -- (7,0) -- (6, 0) -- (7,1) -- (6,1);
  \path
    (6, 1) node[above left=-2pt,font=\small] {$1$}
    (7, 1) node[above right=-2pt,font=\small] {$2$}
    (7, 0) node[below right=-2pt,font=\small] {$4$}
    (6, 0) node[below left=-2pt,font=\small] {$3$};      
\end{tikzpicture}
\caption{Transformation of a square by two permutations 
$\pi = \begin{psmallmatrix}
1 & 2 & 3 & 4 \\
2 & 3 & 4 & 1
\end{psmallmatrix}$ 
and
$\sigma = \begin{psmallmatrix}
1 & 2 & 3 & 4 \\
1 & 2 & 4 & 3
\end{psmallmatrix}$.
}
\label{fig:square}
\end{figure}
\end{example}

Often, not all elements of a group are relevant in a particular context.
For instance, in Example~\ref{ex:square}, we have seen that some elements of the symmetric group $S_{4}$
describe symmetries of a square, while others do not.
Therefore, we recall the concept of a \emph{subgroup}.
A nonempty subset $H \subseteq G$ of a group~$G$ is a \emph{subgroup} if it is closed under the group operation and under taking inverses.
For any subset $E \subseteq G$, we can consider the smallest subgroup of $G$ that contains $E$.
This unique subgroup is denoted by $\langle E \rangle$ and the elements of $E$ are called \emph{generators} of $\langle E \rangle$.

\begin{example}
The set $42\set Z = \{\dots, -84, -42, 0, 42, 84, \dots\}$ of integer multiples of 42 is a subgroup of $\set Z$. 
It is generated by 42, i.e., $42\set Z = \langle 42\rangle$.
A subgroup of the symmetric group $S_{4}$ is the eight element set
\begin{align*}
	\{ 
	\id,
	\begin{psmallmatrix}
	1 & 2 & 3 & 4 \\
	2 & 3 & 4 & 1
	\end{psmallmatrix},
	\begin{psmallmatrix}
	1 & 2 & 3 & 4 \\
	3 & 4 & 1 & 2
	\end{psmallmatrix},
	\begin{psmallmatrix}
	1 & 2 & 3 & 4 \\
	4 & 1 & 2 & 3
	\end{psmallmatrix},
	\begin{psmallmatrix}
	1 & 2 & 3 & 4 \\
	2 & 1 & 4 & 3
	\end{psmallmatrix},
	\begin{psmallmatrix}
	1 & 2 & 3 & 4 \\
	4 & 3 & 2 & 1
	\end{psmallmatrix},
	\begin{psmallmatrix}
	1 & 2 & 3 & 4 \\
	1 & 4 & 3 & 2
	\end{psmallmatrix},
	\begin{psmallmatrix}
	1 & 2 & 3 & 4 \\
	3 & 2 & 1 & 4
	\end{psmallmatrix}
	\}.
\end{align*}
This subgroup describes all symmetries of a square.
A set of generators is given by $\{ \begin{psmallmatrix}
	1 & 2 & 3 & 4 \\
	2 & 3 & 4 & 1
	\end{psmallmatrix},
	\begin{psmallmatrix}
	1 & 2 & 3 & 4 \\
	2 & 1 & 4 & 3
	\end{psmallmatrix}
	 \}$.
\end{example}

%

\subsection{Syntactic Symmetries}
\label{sec:syntactic-symmetries}

In this section, we study symmetries of the syntactic structure of DQBFs.
To this end, we consider functions on $\BF(X\cup Y)$.
However, we cannot allow arbitrary transformations of Boolean formulas.
As a technicality, we have to require that these functions respect the semantics of propositional satisfiability
and that they also respect the dependency structure of DQBFs.
We will formalize these compatibility requirements in Definition~\ref{def:preserve-sat} and~\ref{def:admissible-function} below. 
An analogous restriction is also required in the case of QBFs, cf.~\cite[Def.~3]{KS18}.

For the following, it is convenient to introduce the following auxiliary notion.
For a set of variables $V \subseteq X \cup Y$, a function $g\colon\BF(V) \to \BF(V)$, and an assignment $\sigma \in \set A(V)$,
we denote by $g(\sigma)$ the assignment $g(\sigma) \in \set A(V)$ defined by $g(\sigma)(v) = [g(v)]_{\sigma}$
for all $v \in V$.

\begin{definition}
\label{def:preserve-sat}
A function $g\colon\BF(V) \to \BF(V)$ \emph{preserves propositional satisfiability} if
$[g(\phi)]_{\sigma} = [\phi]_{g(\sigma)}$ for every assignment $\sigma \in \set A(V)$ 
and every formula $\phi \in \BF(V)$.
\end{definition}

It follows from the definition that a function $g$ that preserves propositional satisfiability
is compatible with the logical connectives in the sense 
that $g(\lnot \phi)$ and $\lnot g(\phi)$ are logically equivalent, as are $g(\phi \circ \psi)$ and $g(\phi) \circ g(\psi)$
for all $\phi, \psi \in \BF(V)$ and every binary connective~$\circ$.
Therefore, such a function is essentially determined by its values on the variables.

\begin{example}
\label{ex:preserve-sat}
\begin{enumerate}
\item Every function $g\colon\BF(V)\to\BF(V)$ such that $g(\phi)$ is equivalent to $\phi$
  for every $\phi\in\BF(V)$ preserves propositional satisfiability.
  However, such functions are of little use.
  On the contrary, we really only care about formulas \emph{up to equivalence,} so mapping
  formulas to equivalent formulas is trivial.
\item It is more interesting to map literals to literals.
  For example, for $V=\{x,y,z\}$ there is a function $g\colon\BF(V) \to \BF(V)$
  preserving propositional satisfiability with $g(x) = \lnot x$, $g(y) = z$, $g(z) = y$.
  Indeed, the function $g$ is uniquely determined (up to equivalence), in the sense that,
  for example, $g(x)=\lnot x$ forces $g(\neg x)$ to be equivalent to~$x$.
\item A function $h\colon\BF(V) \to \BF(V)$ satisfying $h(x \wedge y) = x$, $h(x) = x$, and $h(y) = y$
  cannot preserve propositional satisfiability, because the
  formulas $h(x \wedge y) = x$ and $h(x) \wedge h(y) = x \wedge y$ are not equivalent.
\item There are functions preserving propositional satisfiability that do not just map
  literals to literals.
  For example, for $V=\{x,y\}$ there is such a function $g\colon\BF(V)\to\BF(V)$
  with $g(x)=x$ and $g(y)=x\oplus y$.
\end{enumerate}
\end{example}

We capture the following lemma, which will be needed later.

\begin{lemma}
\label{lem:group-action-on-assignments}
Let $h,h' \colon \BF(V) \to \BF(V)$ be two functions that preserve propositional satisfiability.
For any $\sigma \in \set A(V)$, we have
$$
	(h \circ h')(\sigma) = h'(h(\sigma)).
$$
\end{lemma}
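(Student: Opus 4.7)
The plan is to prove equality of two assignments by showing they agree on every variable $v \in V$, and the whole argument reduces to unpacking definitions and applying Definition~\ref{def:preserve-sat} exactly once.

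First, I would recall that for any function $g\colon \BF(V) \to \BF(V)$ and any $\sigma \in \set A(V)$, the assignment $g(\sigma)$ is determined by its values $g(\sigma)(v) = [g(v)]_\sigma$ on variables $v \in V$. So it suffices to verify, for an arbitrary $v \in V$, that $(h \circ h')(\sigma)(v) = h'(h(\sigma))(v)$.

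Next, I would compute the left-hand side: by definition of the action on assignments applied to the composed function $h \circ h'$, we have $(h \circ h')(\sigma)(v) = [(h \circ h')(v)]_\sigma = [h(h'(v))]_\sigma$. Now I would apply Definition~\ref{def:preserve-sat} to the function $h$ with the formula $\phi = h'(v)$: since $h$ preserves propositional satisfiability, $[h(h'(v))]_\sigma = [h'(v)]_{h(\sigma)}$. Finally, reading the right-hand side with the definition of the action on assignments applied to $h'$ and the assignment $h(\sigma)$, we recognize $[h'(v)]_{h(\sigma)} = h'(h(\sigma))(v)$, which concludes the chain of equalities.

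The proof is essentially mechanical; there is no real obstacle. The only point one has to be careful about is the direction of composition: we use the preserve-sat property of $h$ (not of $h'$) with the formula $h'(v)$, which is what produces the switch from $[h(\cdot)]_\sigma$ to $[\cdot]_{h(\sigma)}$ and thereby explains why the order of $h$ and $h'$ reverses between the two sides of the claimed identity. Note that we do not need to invoke the preserve-sat assumption on $h'$ at all; it is only there because the statement mentions $h'$ acting on assignments via the same definition, which requires no special hypothesis beyond being a function $\BF(V) \to \BF(V)$.
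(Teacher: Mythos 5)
Your proof is correct and follows essentially the same chain of equalities as the paper: unfold the definition of the action on assignments, apply preserve-sat of $h$ to the formula $h'(v)$, then recognize the result as $h'(h(\sigma))(v)$ by definition. Your observation that the preserve-sat hypothesis on $h'$ is not actually needed is a valid refinement; the paper's final equality $[h'(v)]_{h(\sigma)} = [v]_{h'(h(\sigma))}$ can indeed be read either as an instance of preserve-sat for $h'$ or, as you do, simply as the definition of $h'(h(\sigma))$ combined with the evaluation of a variable, and the latter reading shows the hypothesis is superfluous.
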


\begin{proof}
Observe that, for all $v \in V$, we have
\begin{align*}
	(h \circ h')(\sigma)(v) = [(h\circ h')(v)]_{\sigma} =  [h(h'(v))]_{\sigma} =  [h'(v)]_{h(\sigma)} = [v]_{h'(h(\sigma))}.\; \qed
\end{align*}
\end{proof}

To formally specify that syntactic symmetries have to respect the dependency structure of DQBFs,
we fix a prefix $P = \forall x_1,\dots,x_n \exists y_1(D_1),\dots, y_k(D_k)$.
In the following, we say that a Boolean formula $\phi \in \BF(Y)$
\emph{depends on} a variable $x_{i}$ if $\phi$ contains a variable $y_{j}$ such that $x_{i} \in D_{j}$.

\begin{definition}
\label{def:admissible-function}
Let $P = \forall x_1,\dots,x_n \exists y_1(D_1),\dots, y_k(D_k)$ be a prefix for $X$ and~$Y$.
A bijective function $g \colon \BF(X \cup Y)\to\BF(X \cup Y)$ is \emph{admissible} (w.r.t.~$P$) if the following 
conditions are satisfied for all $i \in \{1,\dots,n\}$ and $j \in \{1,\dots,k\}$:
\begin{enumerate}
	\item $g$ preserves propositional satisfiability;
	\item  $g(x_{i})\in\BF(X)$ and $g(y_{j})\in\BF(Y)$;\label{item:admissible-2} 
	\item if $g(y_{j})$ depends on $x_{i}$, then $g^{-1}(x_{i}) \in \BF(D_{j})$;\label{item:admissible-3} 
\end{enumerate}
\end{definition}

The first condition is the same as the first condition in~\cite[Def.~3]{KS18} for the QBF case.
The other two conditions in Definition~\ref{def:admissible-function} generalize the second condition in~\cite[Def.~3]{KS18}.
They ensure that admissible functions transform existential (resp.~universal) variables into existential (resp.~universal) formulas
and that admissible functions preserve the dependencies of the prefix~$P$. 

An important class of admissible functions is that of permutations of literals.
For this class, the conditions of Definition~\ref{def:admissible-function} can be simplified:
A permutation of literals preserves propositional satisfiability if and only if it is compatible with negation, i.e., $g(x) = l$ implies $g(\lnot x) = \lnot l$.
The second condition can be rephrased as ``existential (resp.~universal) literals are mapped to existential (resp.~universal) literals''
and the last condition simplifies to ``if $y_{j}$ depends on $x_{i}$, then $g(y_{j})$ depends on $g(x_{i})$''.

However, not all admissible functions are permutations of literals as the following example shows.

\begin{example}
Consider the prefix $P = \forall x,y$.
Since this prefix does not contain existential variables, conditions~\ref{item:admissible-2} and~\ref{item:admissible-3} of Definition~\ref{def:admissible-function} 
are trivially fulfilled.
Thus, any bijective function $g \colon \BF(\{x,y\})\to\BF(\{x,y\})$ that preserves satisfiability is admissible w.r.t.~$P$.
To systematically construct such a function, we can proceed as follows:

There are 16 different Boolean functions with two inputs. 
	For each of these 16 Boolean functions, pick a formula $\phi_{i} \in \BF(\{x,y\})$ ($i = 1,\dots,16$) representing this function.
	Then, find a bijective map $\tilde g\colon \{\phi_{1},\dots,\phi_{16}\} \to \{\phi_{1},\dots,\phi_{16}\}$ that preserves propositional
	satisfiability.
	Since there are only finitely many possible candidates for $\tilde g$, one can try them all exhaustively. 
 
 Once such a bijection $\tilde g$ has been found, it can be extended to all of $\BF(\{x,y\})$ by mapping all 
	formulas that are logically equivalent to $\phi_{i}$ ($i = 1,\dots,16$) to formulas that are logically equivalent to $\tilde g(\phi_{i})$. 
	Since each $\phi_{i}$ is logically equivalent to the same number of formulas (namely to countably many) and since
	each formula in $\BF(\{x,y\})$ is logically equivalent to exactly one $\phi_{i}$, this extension can be done in a bijective 
	manner (in fact, there are uncountably many such bijective extensions).

An example of an admissible function that can be constructed in this way is $g\colon \BF(\{x,y\}) \to \BF(\{x,y\})$ with
$g(x) = x$ and $g(y) = x \oplus y$.

Evidently, this procedure can be extended to more than two variables and to also incorporate the conditions~\ref{item:admissible-2} and~\ref{item:admissible-3} of Definition~\ref{def:admissible-function}.
\end{example}

\begin{example}
As another example, let $P = \forall x_{1},x_{2} \exists y_{1}\big(\{x_{1}\}\big), y_{2}\big(\{x_{2}\}\big)$.
There is an admissible function which exchanges
$x_{1}$ with $x_{2}$ and $y_{1}$ with $y_{2}$.
Any function $h \colon \BF(X \cup Y)\to\BF(X \cup Y)$ with $h(x_{1}) = y_{1}$ cannot be admissible because of the second condition.
Neither can be a function $h$ which leaves $x_{1}$ and $x_{2}$ fixed but exchanges $y_{1}$ and $y_{2}$.
This follows from the third condition, as then $h(y_{1}) = y_{2}$ does not depend on $h(x_{1}) = x_{1}$. 
\end{example}

Like in the case of QBFs, cf.~\cite[Thm.~5]{KS18}, admissible functions preserve the truth value of DQBFs.
More precisely, if $g$ is an admissible function w.r.t.~a prefix $P$, then a DQBF $P.\phi$ is true if and only if
$P. g(\phi)$ is true. 
In particular, there is an explicit correspondence between models of $P.\phi$ and those of $P. g(\phi)$.
We defer the formalization of this statement and its proof to Proposition~\ref{thm5} in the next section,
since they rely on constructions introduced later.

For any prefix $P$, the identity function, which maps every formula to itself, is admissible. 
Moreover, if $g$ and $h$ are two admissible functions w.r.t.~$P$, then so is their composition $g \circ h$.
This means that the set of all admissible functions w.r.t.~$P$ is closed under function composition.
Since also the inverse $g^{-1}$ of any admissible function $g$ is itself admissible, the set of all admissible functions 
w.r.t.~$P$ forms a group.

\begin{definition}
\label{def:admissible-group-action}
Let $P$ be a prefix for $X$ and $Y$.
An \emph{admissible group} $G$ (w.r.t.~$P$) is any subgroup of the group of all admissible functions w.r.t.~$P$.
\end{definition}

Using admissible groups, we can now introduce the concept of \emph{syntactic symmetry groups}.

\begin{definition}\label{def:syntactic-symmetry-group}
Let $P.\phi$ be a DQBF and let $G$ be an admissible group w.r.t.~$P$.
We call $G$ a \emph{syntactic symmetry group} for $P.\phi$ 
if $[P.\phi]_s = [P.g(\phi)]_s$ for all $g \in G$ and all interpretations $s \in \set S(P)$.
\end{definition}

We call the elements $g \in G$ of a syntactic symmetry group $G$ \emph{(syntactic) symmetries} of the DQBF $P.\phi$.
For any syntactic symmetry $g$ of $P.\phi$, the definition above implies that $P.\phi$ and $P.g(\phi)$ have the same models. 
Unlike the name ``syntactic symmetry'' suggests, it is not required that $\phi$ and $g(\phi)$ agree letter by letter. 
The motivation for the naming is merely that $g$ acts on syntactic objects, while the semantic symmetries introduced later operate on interpretations. 

We note that the definition of syntactic symmetry groups generalizes the corresponding notion for QBFs
introduced in~\cite[Def.~6]{KS18} in two ways:
Not only does Definition~\ref{def:syntactic-symmetry-group} apply to DBQFs while~\cite[Def.~6]{KS18} only applies to QBFs,
but also, more interestingly, the latter requires logical equivalence of the Boolean formulas $\phi$ and $g(\phi)$, 
whereas the former only requires equivalence of the quantified formulas $P.\phi$ and $P.g(\phi)$.
This more general definition now allows to deal with syntactic symmetries of (D)QBFs that were not covered by the previous definition.

\begin{example}
Consider the DQBF $\Phi = P. (x \wedge y)$ with prefix $P = \forall x \exists y\big(\{x\}\big)$.
A syntactic symmetry group for $\Phi$ is $G = \{\id, g\}$, where $\id$ is the identity function 
and $g$ is the admissible function induced by $g(x) = \lnot x$ and $g(y) = y$.
It is easy to see that 
\[
	[\Phi]_{s} = [P. (x \wedge y)]_{s} = \bot = [P. (\lnot x \wedge y)]_{s} = [P. g(x \wedge y)]_{s}
\]
for all $s \in \set S(P)$.
This shows, on the one hand, that $g$ is indeed a symmetry of $\Phi$, and, on the other hand, that $\Phi$ is false.
We note that, according to~\cite[Def.~6]{KS18}, the function $g$ would not be a symmetry of the QBF $\forall x \exists y. (x \wedge y)$
because the Boolean formulas $x \wedge y$ and $g(x \wedge y) = \lnot x \wedge y$ are not equivalent.
\end{example}


\subsection{Semantic Symmetries}
\label{sec:semantic-symmetries}

In the following, we study symmetries of the semantic structure of DQBFs.
To this end, we fix a prefix $P$ and consider transformations of interpretations for $P$, i.e.,
we look at bijective functions on $\set S(P)$.
Given a DQBF $\Phi$, we are interested in functions that transform models of $\Phi$ into
other models. 
In contrast to the previous section, we now have to impose no technical restrictions on the considered maps.

\begin{definition}\label{def:semantic-symmetry-group}
Let $P.\phi$ be a DQBF and let $G$ be a group consisting of bijective functions $g\colon\set S(P)\to\set S(P)$.
We call the group $G$ a \emph{semantic symmetry group} for $P.\phi$ 
if $[P.\phi]_s = [P.\phi]_{g(s)}$ for all $g \in G$ and all interpretations $s \in \set S(P)$.
\end{definition}

Analogous to syntactic symmetry groups, we call the elements of a semantic symmetry group \emph{(semantic) symmetries}.

\begin{example}
\label{ex:semantic-symmetries}
Consider the DQBF 
\[
	P. \phi = \forall x_{1},x_{2}\exists y_{1} \big(\{x_{1}\}\big), y_{2}\big(\{x_{2}\}\big) . \left(x_{1} \vee y_{1}\right) \wedge \left(x_{2} \vee y_{2}\right).
\]
Note that every interpretation $s \in \set S(P)$ is of the following form for some $\alpha,\beta,\gamma,\delta\in\{\top,\bot\}$:
\begin{center}
\begin{forest}
for tree={draw, circle, inner sep=2pt, s sep=30pt, l = 24pt}
[, phantom, s sep = 2cm
[{}, name=x1
    [{}, edge label={node[font=\small, midway,left]{$\bot$}}, name=x2
        [{}, edge label={node[font=\small,midway,left,]{$\bot$}}, name=y11
            [{}, edge label={node[font=\small,midway,left]{$\alpha$}}, name=y21
               	[{}, edge label={node[font=\small,midway,left]{$\gamma$}}]
            ]
        ]
        [{}, edge label={node[font=\small,midway,right]{$\top$}}, name=y12
            [{}, edge label={node[font=\small,midway,right]{$\alpha$}}, name=y22
                [{}, edge label={node[font=\small,midway,right]{$\delta$}}]
            ]
        ]
    ]
    [{}, edge label={node[font=\small,midway,right]{$\top$}}
        [{}, edge label={node[font=\small,midway,left]{$\bot$}}, name=y13
            [{}, edge label={node[font=\small,midway,left]{$\beta$}}, name=y23
                [{}, edge label={node[font=\small,midway,left]{$\gamma$}}]
            ]
        ]
        [{}, edge label={node[font=\small,midway,right]{$\top$}}, name=y14
            [{}, edge label={node[font=\small,midway,right]{$\beta$}}, name=y24
                [{}, edge label={node[font=\small,midway,right]{$\delta$}}]
            ]
        ]
    ]
]
]
\node (p) [left=15pt of y21] {$y_{2}$:};
\node at (p |- y11) {$y_{1}$:};
\node at (p |- x2) {$x_{2}$:};
\node at (p |- x1) {$x_{1}$:};
\path (y11) edge[dashed, bend right=20]  (y12);
\path (y13) edge[dashed, bend right=20]  (y14);
\path (y21) edge[dashed, bend right=20]  (y23);
\path (y22) edge[dashed, bend right=20]  (y24);
\node [below=26pt of y21] {$\alpha \wedge \gamma$};
\node [below=26pt of y22]{$\alpha$};
\node [below=26pt of y23] {$\gamma$};
\node [below=26pt of y24] {$\top$};
\end{forest}
\end{center}

Below the leaves of this tree, we show the truth value of the Boolean formula~$\phi =  \left(x_{1} \vee y_{1}\right) \wedge \left(x_{2} \vee y_{2}\right)$ depending on the interpretation $s$. 
This shows that $s$ is a model of $P.\phi$ if and only if $\alpha = \gamma = \top$.
In particular, we can see that the truth value of $P.\phi$ does not depend on $\beta$ or $\delta$.

From this information, we can derive some semantic symmetries for $P.\phi$.
One is given by the function $g_{\beta} \colon \set S(P) \to \set S(P)$, which 
replaces $\beta$ by $\lnot \beta$ and leaves everything else unchanged.
Another one is $g_{\delta} \colon \set S(P) \to \set S(P)$, which 
replaces $\delta$ by $\lnot \delta$ and leaves everything else unchanged.
Therefore, a semantic symmetry group for $P.\phi$ is given by $G = \langle g_{\beta}, g_{\delta} \rangle = \{\id, g_{\beta}, g_{\delta}, g_{\beta} \circ g_{\delta}\}$.
\end{example}

It was observed for QBFs that many semantic symmetries arise from syntactic symmetries, cf.~\cite[Sec.~5]{KS18}.
This observation also generalizes to DQBFs.  
The following example shows how a syntactic symmetry of a DQBF naturally gives rise to a semantic symmetry. 

\begin{example}
\label{ex:syntactic-to-semantic}
Reconsider the DQBF 
\[
	P. \phi = \forall x_{1},x_{2}\exists y_{1} \big(\{x_{1}\}\big), y_{2}\big(\{x_{2}\}\big) . \left(x_{1} \vee y_{1}\right) \wedge \left(x_{2} \vee y_{2}\right).
\]

A syntactic symmetry of $P. \phi$ is given by the admissible function $g$ that exchanges $x_{1}$ with $x_{2}$ and $y_{1}$ with $y_{2}$.
We describe how to translate this syntactic symmetry into a semantic one.

Each interpretation $s = (s_{1}, s_{2}) \in \set S(P)$ of $P$ consists of two Skolem functions,
which, by Remark~\ref{rem:skolem-function-as-formula}, can be represented by Boolean formulas
$s_{i} \in \BF(\{x_{i}\})$ ($i = 1,2$). 
Now, exchanging $x_{1}$ with $x_{2}$ in the original formula $P.\phi$ corresponds semantically to exchanging the 
roles of $x_{1}$ and $x_{2}$ in the Skolem functions $s_{1}$ and $s_{2}$, respectively, i.e., it corresponds to
replacing $s_{i}$ by $g(s_{i})$ for $i = 1,2$.
Further, exchanging $y_{1}$ with $y_{2}$ in $P. \phi$ corresponds semantically to exchanging the order of the Skolem
functions in the interpretation $s$, i.e, it corresponds to replacing $s = (s_{1}, s_{2})$ by $(s_{2}, s_{1})$. 
Combining these two steps, we consider the function $f \colon  \set S(P) \to \set S(P)$ defined by
\[
	(s_{1}, s_{2}) \mapsto (g(s_{2}), g(s_{1})).
\]
This function $f$ satisfies
\[
	[P. g(\phi)]_{s} = [P. \phi]_{f(s)},
\]
for all $s \in \set S(P)$.
Moreover, since $g$ is a syntactic symmetry of $P.\phi$, this implies 
\[
	[P. \phi]_{s} = [P. \phi]_{f(s)},
\]
showing that $f$ is a semantic symmetry of $P.\phi$.

For example, for $s = (\lnot x_{1}, x_{2}) \in \set S(P)$, we have 
$f(s) = (x_{1}, \lnot x_{2})$ and $[P. \phi]_{s} = \bot = [P. \phi]_{f(s)}$.
Analogously, for $s' = (\lnot x_{1}, \top)$, we get
 $f(s') = (\top, \lnot x_{2})$ and $[P. \phi]_{s'} = \top = [P. \phi]_{f(s')}$.
\end{example}

Example~\ref{ex:syntactic-to-semantic} suggests a straightforward way to translate a syntactic symmetry into a semantic one, mimicking the same behaviour.
However, in general, the situation can be a lot more complex than this introductory example might suggest.
In particular, syntactic symmetries that are not just literal permutations require a more sophisticated treatment.
Also, by combining different syntactic symmetries, more complex semantic symmetries can arise that treat different subtrees of an interpretation differently. 
The following Definition~\ref{def:g(s)} takes care of all these particularities and yields a general way 
to translate a syntactic symmetry into a semantic one.

For what follows, we have to generalize one definition slightly.
Recall that, for a function $g\colon\BF(V) \to  \BF(V)$ and an assignment $\sigma \in \set A(V)$, 
the assignment $g(\sigma)$ is given by $g(\sigma)(v) = [g(v)]_{\sigma}$ for all $v \in V$.
So far, this assignment is defined for a function $g$ and an assignment $\sigma$ on the same set of variables $V$.
In the following, however, we need to consider cases where $g\colon\BF(X \cup Y) \to \BF(X \cup Y)$, but $\sigma \in \set A(X)$.
For an arbitrary function $g$, considering $g(\sigma)$ in this setting would not make sense, because, for $x \in X$, the formula $g(x)$ could contain
variables from $Y$ and thus $g(\sigma)(x) = [g(x)]_{\sigma}$ would not be well-defined.
However, if $g$ is an admissible function, then, by definition, $g(x) \in \BF(X)$ for all $x \in X$, and thus, in this case, we
can define $g(\sigma) \in \set A(X)$ as the assignment defined by $g(\sigma)(x) = [g(x)]_{\sigma}$ for all $x \in X$.

Also, recall from Definition~\ref{def:induced-assignment} that for an interpretation $t \in \set S(P)$ and an assignment $\sigma \in \set A(X)$, 
the induced assignment $\sigma_{t} \in \set A(X \cup Y)$ extends $\sigma$ by assigning the existential variables as implied by $t$ and $\sigma$. 

\begin{definition}
\label{def:g(s)}
  Let $G\syn$ be an admissible group w.r.t.~a prefix~$P$.
  For $g\in G\syn$ and $s\in\set S(P)$, we define
  $g(s)\in\set S(P)$ as the interpretation $t\in\set S(P)$
  with the property that $\sigma_t=g(g^{-1}(\sigma)_s)$ for all assignments $\sigma \in \set A(X)$.
\end{definition}

In order to justify this definition, we have to show that the expression $t=g(s)$ is well\nobreakdash-defined. 
To see this, let $j\in\{1,\dots,k\}$ be arbitrary, and let $D_j$ be the dependency
set of~$y_j$. We have to show that, for any two assignments $\sigma,\sigma' \in \set A(X)$
with $\sigma(x)=\sigma'(x)$ for all $x\in D_j$, we have 
\[
	[y_j]_{g(g^{-1}(\sigma)_s)}=[y_j]_{g(g^{-1}(\sigma')_s)}.
\]
Suppose otherwise.
Then, the admissibility of $g$ implies
\[
	[g(y_j)]_{g^{-1}(\sigma)_s}\neq[g(y_j)]_{g^{-1}(\sigma')_s},
\]
which means that $g(y_j)$ contains a
variable $y\in Y$ such that 
\[
	[y]_{g^{-1}(\sigma)_s}\neq[y]_{g^{-1}(\sigma')_s}.
\]
This implies that there must be a variable $x$ in the dependency set of $y$ with
\[
	[x]_{g^{-1}(\sigma)_s}\neq[x]_{g^{-1}(\sigma')_s},
\]
i.e.,
$[g^{-1}(x)]_\sigma\neq[g^{-1}(x)]_{\sigma'}$. 
Then, $g^{-1}(x)$ contains some variable $x_i\in X$ such that $\sigma(x_i)\neq\sigma'(x_i)$. 
However, by the admissibility of the group, $x_i$ must belong to~$D_j$, which gives a contradiction to the
choice of $\sigma,\sigma'$.  Thus, $t$ is well-defined.

We collect some properties of the interpretation $g(s)$.

\begin{lemma}
\label{lemma:syntactic-to-semantic}
Let $G\syn$ be an admissible group w.r.t.~a prefix $P$.
For every $g \in G\syn$, the bijective function $\set S(P) \to \set S(P), s \mapsto g(s)$ satisfies $g(\sigma)_{g(s)} = g(\sigma_{s})$
for all $s \in \set S(P)$ and $\sigma \in \set A(X)$.
\end{lemma}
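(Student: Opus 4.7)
The plan is to prove this essentially by unwinding Definition~\ref{def:g(s)} after the right substitution, and to derive bijectivity from the fact that $G\syn$ is a group, so $g^{-1}$ is also admissible.

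First, I would record the cancellation $g^{-1}(g(\sigma))=\sigma$ and $g(g^{-1}(\sigma))=\sigma$ for every $\sigma \in \set A(X)$. Both follow from Lemma~\ref{lem:group-action-on-assignments} applied with $(h,h')=(g,g^{-1})$ and $(g^{-1},g)$, since $g\circ g^{-1}$ and $g^{-1}\circ g$ act as the identity on $\BF(X\cup Y)$ and hence on assignments (on $X$, we are using that admissibility guarantees $g(x)\in\BF(X)$, so $g(\sigma)$ is well-defined as an element of $\set A(X)$, as explained just before Definition~\ref{def:g(s)}).

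Next, I would prove the identity $g(\sigma)_{g(s)}=g(\sigma_s)$ directly by substitution. Definition~\ref{def:g(s)} characterizes $t=g(s)$ by requiring $\tau_t = g(g^{-1}(\tau)_s)$ for every $\tau\in\set A(X)$. Taking $\tau := g(\sigma)$, the right-hand side becomes $g\bigl(g^{-1}(g(\sigma))_s\bigr)=g(\sigma_s)$ by the cancellation from the previous paragraph, and the left-hand side is exactly $g(\sigma)_{g(s)}$. That delivers the identity.

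Finally, bijectivity: since $G\syn$ is a group, $g^{-1}\in G\syn$ is admissible, so Definition~\ref{def:g(s)} also defines a map $s\mapsto g^{-1}(s)$ on $\set S(P)$. I would verify this is the two-sided inverse of $s\mapsto g(s)$ by comparing induced assignments. For $s'=g^{-1}(g(s))$, Definition~\ref{def:g(s)} gives $\sigma_{s'}=g^{-1}\bigl(g(\sigma)_{g(s)}\bigr)$, which by the identity just proved equals $g^{-1}(g(\sigma_s))=\sigma_s$ for every $\sigma\in\set A(X)$; an interpretation is determined by the family of induced assignments it produces, so $s'=s$. The symmetric computation yields $g(g^{-1}(s))=s$. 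The only subtlety worth flagging is to be consistent about the order convention in Lemma~\ref{lem:group-action-on-assignments}; beyond that the proof is purely a matter of plugging into the definitions, which is the whole reason Definition~\ref{def:g(s)} was set up in this particular twisted form.
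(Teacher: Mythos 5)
Your proof is correct and follows essentially the same route as the paper: the identity $g(\sigma)_{g(s)} = g(\sigma_s)$ is obtained by substituting $\tau = g(\sigma)$ into the defining equation of $g(s)$ from Definition~\ref{def:g(s)} and canceling, and the inverse of $s \mapsto g(s)$ is $s \mapsto g^{-1}(s)$. The only difference is that you carry out the bijectivity verification explicitly (via comparison of induced assignments), where the paper simply asserts it as clear; your version is a bit more careful but the underlying argument is the same.
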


\begin{proof}
The map $s \mapsto g(s)$ is clearly a bijective function with inverse given by $s \mapsto g^{-1}(s)$. 
Furthermore, for any $s \in \set S(P)$ and $\sigma \in \set A(X)$, we have
\begin{align*}
	g(\sigma)_{g(s)} = g(g^{-1}( g(\sigma ))_{s}) = g(\sigma_{s}).~\qed
\end{align*}
\end{proof}

The following result formalizes the statement that we can transform syntactic symmetries into semantic ones.

\begin{proposition}\label{thm5}
Let $P.\phi$ be a DQBF and let $G\syn$ be an admissible group w.r.t.~$P$.
Then, for every $g\in G\syn$ and all $s \in \set S(P)$, we have
\[
	 [P. g(\phi)]_{s} = [P.\phi]_{g(s)}.
\]
In particular, $P.\phi$ is true if and only if $P.g(\phi)$ is true, and
$s$ is a model of $P.g(\phi)$ if and only if $g(s)$ is a model of $P.\phi$.
\end{proposition}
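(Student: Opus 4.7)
The plan is to unfold both sides of the claimed equation using Definition~\ref{def:dqbf-truth-value}, then rewrite one into the other by combining the property that $g$ preserves propositional satisfiability with Lemma~\ref{lemma:syntactic-to-semantic}. Explicitly, I would first write
\[
  [P.g(\phi)]_s=\bigwedge_{\sigma\in\set A(X)}[g(\phi)]_{\sigma_s},
  \qquad
  [P.\phi]_{g(s)}=\bigwedge_{\sigma\in\set A(X)}[\phi]_{\sigma_{g(s)}}.
\]
Since $g$ is admissible, it preserves propositional satisfiability (Definition~\ref{def:preserve-sat}), which yields $[g(\phi)]_{\sigma_s}=[\phi]_{g(\sigma_s)}$ for every $\sigma\in\set A(X)$. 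Lemma~\ref{lemma:syntactic-to-semantic} then converts this further into $[\phi]_{g(\sigma_s)}=[\phi]_{g(\sigma)_{g(s)}}$.

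Next I would argue that the index substitution $\sigma\mapsto g(\sigma)$ is a bijection of $\set A(X)$ onto itself. Admissibility guarantees $g(x)\in\BF(X)$ for every $x\in X$, so $g(\sigma)$ is indeed an element of $\set A(X)$ whenever $\sigma\in\set A(X)$; applying Lemma~\ref{lem:group-action-on-assignments} to the pairs $(g,g^{-1})$ and $(g^{-1},g)$ shows that $\sigma\mapsto g(\sigma)$ and $\sigma\mapsto g^{-1}(\sigma)$ are mutually inverse. Consequently, re-indexing the big conjunction via $\sigma'=g(\sigma)$ gives
\[
  [P.g(\phi)]_s
  =\bigwedge_{\sigma\in\set A(X)}[\phi]_{g(\sigma)_{g(s)}}
  =\bigwedge_{\sigma'\in\set A(X)}[\phi]_{\sigma'_{g(s)}}
  =[P.\phi]_{g(s)},
\]
which is the main identity.

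For the ``in particular'' addendum, I would invoke the bijectivity of the map $s\mapsto g(s)$ on $\set S(P)$ (Lemma~\ref{lemma:syntactic-to-semantic}): $P.\phi$ is true iff there exists $s\in\set S(P)$ with $[P.\phi]_s=\top$, and because $g$ permutes $\set S(P)$, this is equivalent to the existence of some $s$ with $[P.\phi]_{g(s)}=\top$, which by the identity just proved means $[P.g(\phi)]_s=\top$, i.e., $P.g(\phi)$ is true. The same identity read pointwise shows that $s$ is a model of $P.g(\phi)$ precisely when $g(s)$ is a model of $P.\phi$.

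The only delicate point is the re-indexing step: one must check both that $g$ maps $\set A(X)$ into $\set A(X)$ (which needs condition~\ref{item:admissible-2} of Definition~\ref{def:admissible-function}, not just propositional-satisfiability preservation) and that this restriction is bijective. Everything else is a routine chain of substitutions once Lemmas~\ref{lem:group-action-on-assignments} and~\ref{lemma:syntactic-to-semantic} are in hand.
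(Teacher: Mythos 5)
Your proposal is correct and uses the same key ingredients as the paper's proof — the chain $[g(\phi)]_{\sigma_s}=[\phi]_{g(\sigma_s)}=[\phi]_{g(\sigma)_{g(s)}}$ coming from propositional-satisfiability preservation and Lemma~\ref{lemma:syntactic-to-semantic}. The only organizational difference is that the paper proves one implication (``$\Leftarrow$'') and then appeals to the group structure of $G\syn$ to obtain the converse for free, whereas you prove the full equality directly by re-indexing the conjunction via the bijection $\sigma\mapsto g(\sigma)$ on $\set A(X)$. Your variant is arguably a touch more explicit: it establishes the equation itself in one pass, at the cost of having to argue bijectivity of $\sigma\mapsto g(\sigma)$ (which you do correctly via condition~\ref{item:admissible-2} of Definition~\ref{def:admissible-function} and Lemma~\ref{lem:group-action-on-assignments}), while the paper's shortcut needs only that $g$ maps $\set A(X)$ into itself for the one direction. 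Both routes are sound, and the difference is stylistic rather than substantive.
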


\begin{proof}
We show that $[P. g(\phi)]_{s} = \top$ if and only if $[P.\phi]_{g(s)} = \top$.
In fact, since~$G\syn$ is a group, it suffices to show only one direction, say ``$\Leftarrow$''.
To this end, assume that $[P. \phi]_{g(s)} = \top$, i.e., 
$[\phi]_{\sigma_{g(s)}} = \top$ for all $\sigma \in \set A(X)$.
Note that this implies that also $[\phi]_{g(\sigma)_{g(s)}} = \top$ for all $\sigma \in \set A(X)$.
Then, Lemma~\ref{lemma:syntactic-to-semantic} yields
\[
	[g(\phi)]_{\sigma_{s}} = [\phi]_{g(\sigma_{s})} = [\phi]_{g(\sigma)_{g(s)}} = \top,
\]
for all $\sigma \in \set A(X)$. \qed
\end{proof}

Starting from a syntactic symmetry group $G\syn$, we can collect all bijective functions that satisfy a similar condition like the ones constructed in Lemma~\ref{lemma:syntactic-to-semantic}.
This yields a semantic symmetry group, which we call the \emph{associated group} of~$G\syn$.
Note that the definition below is slightly more general than the construction in Lemma~\ref{lemma:syntactic-to-semantic},
in the sense that, in Lemma~\ref{lemma:syntactic-to-semantic}, the element $g \in G\syn$ is fixed, while, in the definition below,
$g$ may depend on $s$ and $\sigma$.

\begin{definition}
Let $G\syn$ be an admissible group w.r.t.~a prefix~$P$.
Furthermore, let $G\sem$ be the set of all bijective functions $f\colon\set S(P)\to\set S(P)$ such that
for every $s\in\set S(P)$ and every assignment $\sigma\in \set A(X)$
there exists $g\in G\syn$ with $g(\sigma)_{f(s)}=g(\sigma_s)$.
Then $G\sem$ is called the \emph{associated group} of~$G\syn$.
\end{definition}

We record the following result for later use.
It follows immediately from Lemma~\ref{lemma:syntactic-to-semantic} and from the definition of the associated group.

\begin{lemma}
\label{lemma:syntactic-to-semantic-associated-group}
Let $G\syn$ be an admissible group w.r.t.~a prefix~$P$.
For any $g \in G\syn$, the function $\set S(P) \to \set S(P), s \mapsto g(s)$ lies in the associated group of $G\syn$.
\end{lemma}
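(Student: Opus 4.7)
The plan is to simply unpack the definition of the associated group $G\sem$ and verify that the map $f\colon s\mapsto g(s)$ meets both conditions required for membership in $G\sem$. Recall that $G\sem$ consists of all bijective functions $f\colon \set S(P)\to\set S(P)$ such that for every $s\in\set S(P)$ and every $\sigma\in\set A(X)$ there exists some $h\in G\syn$ (possibly depending on $s$ and $\sigma$) with $h(\sigma)_{f(s)} = h(\sigma_s)$.

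First I would observe that the map $f$ is a bijection: this is already part of the statement of Lemma~\ref{lemma:syntactic-to-semantic}, which notes that $s\mapsto g(s)$ is bijective with inverse $s\mapsto g^{-1}(s)$ (since $g\in G\syn$ and admissible groups are closed under inverses). So only the existential condition in the definition of $G\sem$ needs to be checked.

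For that, given an arbitrary $s\in\set S(P)$ and $\sigma\in\set A(X)$, I would take the witness to be $g$ itself, which is a valid choice because $g\in G\syn$ and, crucially, does not need to depend on $s$ or~$\sigma$. Lemma~\ref{lemma:syntactic-to-semantic} then yields directly
\[
  g(\sigma)_{f(s)} \;=\; g(\sigma)_{g(s)} \;=\; g(\sigma_s),
\]
which is precisely the required identity with $h=g$. This establishes that $f\in G\sem$.

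There is no genuine obstacle here; the point of the lemma is to record that the definition of the associated group is a \emph{relaxation} of the uniform property proved in Lemma~\ref{lemma:syntactic-to-semantic}. Lemma~\ref{lemma:syntactic-to-semantic} produces a single $g$ that works simultaneously for all $s$ and $\sigma$, whereas the definition of $G\sem$ only demands the existence of \emph{some} witness per pair $(s,\sigma)$. Hence the inclusion asserted by the lemma is immediate.
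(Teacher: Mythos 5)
Your proposal is correct and matches the paper's reasoning exactly: the paper gives no explicit proof, only remarking that the result follows immediately from Lemma~\ref{lemma:syntactic-to-semantic} (which provides the uniform witness $g$) together with the definition of the associated group. Your unpacking of this observation, including the remark that the definition of $G\sem$ is a relaxation allowing the witness to depend on $(s,\sigma)$, is precisely what the paper has in mind.
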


If $G\syn$ is a syntactic symmetry group, then the associated group $G\sem$ is a semantic symmetry group.

\begin{lemma}
If $G\syn$ is a syntactic symmetry group for a DQBF $\Phi$,
then the associated group $G\sem$ of~$G\syn$ is a semantic symmetry group for~$\Phi$.
\end{lemma}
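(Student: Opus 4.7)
The goal is to show $[P.\phi]_{f(s)} = [P.\phi]_s$ for every $f \in G\sem$ and every $s \in \set S(P)$. My plan is to establish the implication ``$[P.\phi]_s = \top \Rightarrow [P.\phi]_{f(s)} = \top$'' for all $f \in G\sem$ and all $s$; the reverse implication then follows by applying the same statement to the element $f^{-1} \in G\sem$, together with a routine check (using Lemma~\ref{lem:group-action-on-assignments}) that $G\sem$ is closed under composition and inversion and hence genuinely a group.

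Unfolding the definitions, $[P.\phi]_{f(s)} = \bigwedge_{\sigma \in \set A(X)} [\phi]_{\sigma_{f(s)}}$. For each $\sigma$, the defining property of the associated group supplies some $g_\sigma \in G\syn$ satisfying $g_\sigma(\sigma)_{f(s)} = g_\sigma(\sigma_s)$. Preservation of propositional satisfiability then gives
\[
[\phi]_{g_\sigma(\sigma)_{f(s)}} = [\phi]_{g_\sigma(\sigma_s)} = [g_\sigma(\phi)]_{\sigma_s},
\]
and because $g_\sigma \in G\syn$ is a syntactic symmetry of $P.\phi$, we have $[P.g_\sigma(\phi)]_s = [P.\phi]_s$. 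Under the assumption $[P.\phi]_s = \top$ this common value is $\top$, so the conjunct $[g_\sigma(\phi)]_{\sigma_s}$ is $\top$, and hence $[\phi]_{g_\sigma(\sigma)_{f(s)}} = \top$ for every $\sigma \in \set A(X)$.

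The main obstacle is to upgrade this pointwise collection of truths, parameterised by $\sigma$, to $[\phi]_{\tau_{f(s)}} = \top$ for every $\tau \in \set A(X)$. Concretely, this reduces to showing that the witnesses can be selected so that the map $\pi\colon \sigma \mapsto g_\sigma(\sigma)$ is a bijection of $\set A(X)$. The approach I would take is a Hall-type matching argument in the bipartite graph on $\set A(X) \times \set A(X)$ whose edges are precisely the pairs $(\sigma, g(\sigma))$ arising from valid witnesses $g \in G\syn$ for $(s,\sigma)$. The existence clause of the associated-group definition guarantees that every left vertex has a neighbour, and the assignment identity $g(\sigma)_{f(s)} = g(\sigma_s)$ combined with the bijectivity of $f$ on $\set S(P)$ should force Hall's condition, yielding a perfect matching which we take as our choice of $\pi$. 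Once $\pi$ is bijective, every $\tau \in \set A(X)$ has the form $g_\sigma(\sigma)$, so $[\phi]_{\tau_{f(s)}} = \top$ for every $\tau$; this gives $[P.\phi]_{f(s)} = \top$, and the symmetric use of $f^{-1}$ finishes the proof.
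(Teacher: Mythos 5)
Your overall strategy departs from the paper's in an important way. You prove the implication $[P.\phi]_s = \top \Rightarrow [P.\phi]_{f(s)} = \top$, whereas the paper proves the logically equivalent implication $[P.\phi]_s = \bot \Rightarrow [P.\phi]_{f(s)} = \bot$. The paper's choice is decisive: falsity of $[P.\phi]_{f(s)}$ only requires exhibiting a \emph{single} assignment $\tau$ with $[\phi]_{\tau_{f(s)}} = \bot$, and this is obtained directly as $\tau = g(\sigma)$ from the single assignment $\sigma$ that witnesses $[P.\phi]_s = \bot$. No coverage of all of $\set A(X)$ is needed, so the surjectivity problem that you run into never arises. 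Your direction, by contrast, requires establishing $[\phi]_{\tau_{f(s)}} = \top$ for \emph{every} $\tau$, and that is what forces you to argue that the witnesses $g_\sigma$ can be chosen so that $\sigma \mapsto g_\sigma(\sigma)$ is a surjection (equivalently, a bijection) of the finite set $\set A(X)$.

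That surjectivity claim is where your proposal has a genuine gap. You invoke a Hall-type argument, but you do not actually establish Hall's condition, and the properties you cite do not force it. The associated-group definition guarantees that every $\sigma$ has at least one admissible witness, so every left vertex of your bipartite graph has a neighbour; and applying the definition to $f^{-1}\in G\sem$ (with $t = f(s)$) shows, via the $g \mapsto g^{-1}$ correspondence, that every right vertex also has a neighbour. But a finite bipartite graph on $\set A(X)\sqcup\set A(X)$ with no isolated vertices on either side can still violate Hall's condition (e.g.\ the edge set $\{(1,1),(2,1),(3,2),(3,3)\}$ on three left and three right vertices has $|N(\{1,2\})|=1<2$ while having no isolated vertices). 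Nothing in the stated structure of the associated group rules out such configurations, so the matching you need is not guaranteed. Unless you can identify a genuinely stronger combinatorial property of the witness relation, the proof does not close; the clean fix is to switch to the paper's $\bot$-direction, where the single-witness character of falsity eliminates the matching problem entirely.

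A smaller note: you label the closure of $G\sem$ under composition and inversion as a ``routine check.'' It is indeed provable via Lemma~\ref{lem:group-action-on-assignments} as you say, but the paper devotes roughly half the proof to it, so it is worth writing out.
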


\begin{proof}
First, we show that $G\sem$ is indeed a group.
To this end, note that it contains the identity function.
To see that $G\sem$ is closed under function composition, 
let $f, f' \in G\sem$, and let $s\in\set S(P)$ and $\sigma \in \set A(X)$ be arbitrary.
We have to show that there exists a $g \in G\syn$ such that $g(\sigma)_{(f' \circ f)(s)} = g(\sigma_{s})$.
By assumption on $f$ and $f'$, we know that there exist $h, h' \in G\syn$ such that
$h(\sigma)_{f(s)} = h(\sigma_{s})$ and $h'(h(\sigma))_{f'(f(s))} = h'(h(\sigma)_{f(s)})$. 
Now, with $g = h \circ h'$, we obtain,
\begin{align*}
	g(\sigma)_{(f' \circ f)(s)} &= 
	(h \circ h')(\sigma)_{(f' \circ f)(s)} = 
	h'(h(\sigma))_{f'(f(s))} \\
	&= h'(h(\sigma)_{f(s)}) = 
	h'(h(\sigma_{s})) = (h \circ h')(\sigma_{s}) = g(\sigma_{s}),
\end{align*}
where the second and second to last equality follow from Lemma~\ref{lem:group-action-on-assignments}.
Finally, $G\sem$ is also closed under taking inverses.
To see this, note that
$g(\sigma)_{f(s)} = g(\sigma_{s})$ implies $g^{-1}(g(\sigma)_{f(s)}) = \sigma_{s}$.
But every $s$ can be written as $s = f^{-1}(s')$ for some $s' \in \set S(P)$
and every $\sigma$ can be written as $\sigma = g^{-1}(\sigma')$ for some $\sigma' \in \set A(X)$.
This yields $g^{-1}(\sigma'_{s'}) = g^{-1}(\sigma')_{f^{-1}(s')}$ 
for all $s' \in \set S(P)$ and $\sigma' \in \set A(X)$, and hence, $f^{-1} \in G\sem$.

Next, we show that $G\sem$ is a semantic symmetry group. 
To this end,
let $f\in G\sem$ and $s\in\set S(P)$. 
We have to show that $[P.\phi]_s=[P.\phi]_{f(s)}$.
It suffices to show that $[P.\phi]_s=\bot \iff [P.\phi]_{f(s)}=\bot$.
In fact, since $G\sem$ is a group, it even suffices to only show one direction, say ``$\Rightarrow$''.
Recall that $[P.\phi]_{f(s)}=\bot$ if and only if there exists an assignment $\tau \in \set A(X)$ such that $[\phi]_{\tau_{f(s)}}=\bot$.
By assumption, there exists an assignment $\sigma \in \set A(X)$ such that $[\phi]_{\sigma_{s}}=\bot$.
Fix such a $\sigma$ and note that, since $G\sem$ is the associated group of~$G\syn$,
there exists a $g\in G\syn$ such that $g(\sigma)_{f(s)}=g(\sigma_s)$.
Then,
\begin{align*}			
	[\phi]_{g(\sigma)_{f(s)}}
	\overset{\mathclap{\tikz \node {$\downarrow$} node [above=1ex,font=\small] {choice of $g$};}}{=}
	[\phi]_{g(\sigma_{s})}
	\underset{\mathclap{\tikz \node {$\uparrow$} node [below=1ex,font=\small] {$g$ admissible};}}{=}
	[g(\phi)]_{\sigma_{s}}
	\overset{\mathclap{\tikz \node {$\downarrow$} node [above=1ex,font=\small] {$g$ symmetry};}}{=}
	[\phi]_{\sigma_{s}}
	\underset{\mathclap{\tikz \node {$\uparrow$} node [below=1ex,font=\small] {choice of $\sigma$};}}{=} 
	\bot.
\end{align*}
This shows that, for $\tau = g(\sigma) \in \set A(X)$, we have $[\phi]_{\tau_{f(s)}}=\bot$,
implying that $[P.\phi]_{f(s)}=\bot$ as claimed.
\qed
\end{proof}

The associated semantic group is very versatile and typically contains a lot more symmetries than the corresponding syntactic symmetry group.
In particular, if two interpretations are related via one semantic symmetry, then the associated group $G\sem$ also contains elements that allow to exchange and combine
these interpretations.
For example, for any two interpretations $s,s' \in \set S(P)$ that are related via some $f \in G\sem$ via $s' = f(s)$,
there exists another symmetry $h \in G\sem$ with $h(s) = s'$, $h(s') = s$, and $h(t) = t$ for all other $t \in \set S(P) \setminus \{s,s'\}$.
More generally, the associated group contains elements that allow to exchange a subtree of an interpretation $s \in \set S(P)$ with that of $f(s)$ for any semantic symmetry $f \in G\sem$.
In the simplest case, where we ignore some technicalities, this fact can be visualized as in Figure~\ref{fig:sections}.

\begin{figure}
\begin{center}
\scalebox{0.8}{
\begin{tikzpicture}
\newcommand{\height}{1.1cm}

\node [regular polygon, regular polygon sides=3, anchor=north, minimum height=2.8cm] (X) {};
\node [regular polygon, regular polygon sides=3, anchor=north, minimum height=1.4cm] (X0) {};
\coordinate (left-bottom) at ($(X.corner 2) + (0, -\height)$);
\coordinate (right-bottom) at ($(X.corner 3) + (0, -\height)$);
\path[draw]  (X.corner 1) -- (X.corner 2) -- (left-bottom) -- (right-bottom) -- (X.corner 3) -- cycle;
\coordinate (A) at ($(X0.corner 2) + (0.5, 0)$);
\draw [decorate, decoration={snake, segment length=2mm, amplitude=1pt}] (X.corner 1) -- (A);
\node (sigma) at ($(X0.corner 3) + (0.2,0.9)$) {$\sigma$};
\path[draw, ->] (sigma) edge [bend left] ($(X0.corner 3) + (-0.6,0.4)$);
\node [regular polygon, regular polygon sides=3, anchor=north, minimum height=1.39cm, anchor=north] (subtree) at ($(A) + (0, \pgflinewidth)$) {};  
\path[draw] (subtree.corner 1) -- (subtree.corner 2) -- ($(subtree.corner 2) + (0, -\height)$) 
-- ($(subtree.corner 3) + (0, -\height)$) -- (subtree.corner 3) -- cycle;
\node (s) at ($(X.corner 1) + (0,0.4)$) {$s$};

\begin{scope}[xshift=7cm]
\node [regular polygon, regular polygon sides=3, anchor=north, minimum height=2.8cm] (X) {};
\node [regular polygon, regular polygon sides=3, anchor=north, minimum height=1.4cm] (X0) {};
\coordinate (left-bottom) at ($(X.corner 2) + (0, -\height)$);
\coordinate (right-bottom) at ($(X.corner 3) + (0, -\height)$);
\path[draw,fill=gray]  (X.corner 1) -- (X.corner 2) -- (left-bottom) -- (right-bottom) -- (X.corner 3) -- cycle;
\coordinate (A) at ($(X0.corner 2) + (0.5, 0)$);
\draw [decorate, decoration={snake, segment length=2mm, amplitude=1pt}] (X.corner 1) -- (A);
\node (sigma) at ($(X0.corner 3) + (0.2,0.9)$) {$\sigma$};
\path[draw, ->] (sigma) edge [bend left] ($(X0.corner 3) + (-0.6,0.4)$);
\node [regular polygon, regular polygon sides=3, anchor=north, minimum height=1.39cm, anchor=north] (subtree) at ($(A) + (0, \pgflinewidth)$) {};  
\path[draw,fill=gray] (subtree.corner 1) -- (subtree.corner 2) -- ($(subtree.corner 2) + (0, -\height)$) 
-- ($(subtree.corner 3) + (0, -\height)$) -- (subtree.corner 3) -- cycle;
\node (fs) at ($(X.corner 1) + (0,0.4)$) {$f(s)$};
\path[draw, ->] (s) edge [bend left] node [midway,above] {$f$} (fs);
\end{scope}

\begin{scope}[xshift=3.5cm, yshift=-1cm]
\node [regular polygon, regular polygon sides=3, anchor=north, minimum height=2.8cm] (X) {};
\node [regular polygon, regular polygon sides=3, anchor=north, minimum height=1.4cm] (X0) {};
\coordinate (left-bottom) at ($(X.corner 2) + (0, -\height)$);
\coordinate (right-bottom) at ($(X.corner 3) + (0, -\height)$);
\path[draw]  (X.corner 1) -- (X.corner 2) -- (left-bottom) -- (right-bottom) -- (X.corner 3) -- cycle;
\coordinate (A) at ($(X0.corner 2) + (0.5, 0)$);
\draw [decorate, decoration={snake, segment length=2mm, amplitude=1pt}] (X.corner 1) -- (A);
\node (sigma) at ($(X0.corner 3) + (0.2,0.9)$) {$\sigma$};
\path[draw, ->] (sigma) edge [bend left] ($(X0.corner 3) + (-0.6,0.4)$);
\node [regular polygon, regular polygon sides=3, anchor=north, minimum height=1.39cm, anchor=north] (subtree) at ($(A) + (0, \pgflinewidth)$) {};  
\path[draw,fill=gray] (subtree.corner 1) -- (subtree.corner 2) -- ($(subtree.corner 2) + (0, -\height)$) 
-- ($(subtree.corner 3) + (0, -\height)$) -- (subtree.corner 3) -- cycle;
\node (ss) at ($(X.corner 1) + (0,0.4)$) {$s'$};
\path[draw, ->] (s) edge [bend left] node [midway,above] {$h$} (ss);
\end{scope}
\end{tikzpicture}
}
\end{center}
\caption{Visualization of the effect of the associated semantic group on an interpretation $s$}
\label{fig:sections}
\end{figure}

The tree on the left of Figure~\ref{fig:sections} visualizes an interpretation $s$ and the tree on the right shows its image $f(s)$ under some semantic symmetry $f \in G\sem$.
In both interpretations, the subtree rooted at some partial assignment $\sigma$ is highlighted. 
The associated semantic group now contains an element $h \in G\sem$ that allows to transform $s$ into the interpretation $s'$
depicted in the middle, which coincides with $s$ except for the fact that the subtree rooted at $\sigma$ has been replaced by 
the corresponding subtree from $f(s)$.

The following lemma formalizes this fact.
Below, we denote for an assignment $\rho \in \set A(X)$ and a subset $D \subseteq X$ by
$\rho|_{D}$ the restriction of the function $\rho \colon X \to \{\top,\bot\}$ to the domain $D$.

\begin{lemma}
\label{lem:corrected-6}
Let $P=\forall x_1,\dots,x_n\exists y_1(D_{1}),\dots,y_k(D_{k})$ be a prefix,
let $G\syn$ be an admissible group w.r.t.~$P$, and let $G\sem$ be the associated semantic group.
Let $\sigma \in \set A(X)$, $f \in G\sem$, $s \in \set S(P)$, and $i \in \{1,\dots,k\}$ be such that 
\begin{enumerate}
	\item $(j < i \text{ or }D_{i} \not\subseteq D_{j}) \Rightarrow [y_{j}]_{\tau_{s}} = [y_{j}]_{\tau_{f(s)}}$ for all $\tau \in \set A(X)$ with~$\tau|_{D_{i}}=~\sigma|_{D_{i}}$;
	\item $\tau|_{D_{i}} = \sigma|_{D_{i}} \Rightarrow g(\tau)|_{D_{i}} = \tau|_{D_{i}}$ for all $g \in G\syn$ and all $\tau \in \set A(X)$;
\end{enumerate}

Then there exists an interpretation $s' \in \set S(P)$ such that
\begin{align*}
	s'_{j} = \begin{cases}
		s_{j} & \text{ for all } j < i \text{ and all assignments} \\
		s_{j} & \text{ for all } j \geq i \text{ and all assignments } \tau|_{D_{i}} \neq \sigma|_{D_{i}} \\
		f(s_{j}) &  \text{ for all } j \geq i \text{ and all assignments } \tau|_{D_{i}} = \sigma|_{D_{i}} \\
	\end{cases}
\end{align*}
Furthermore, there exists $h \in G\sem$ with $h(s)=s'$.
\end{lemma}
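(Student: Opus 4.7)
The plan is to construct $s'$ explicitly, take $h$ to be the transposition of $\set S(P)$ that swaps $s$ and $s'$ and fixes every other interpretation, and then verify that $h$ lies in $G\sem$.

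For the well-definedness of $s'$, for each $j$ I need $s'_j$ to depend only on the variables in $D_j$. If $j < i$ this is immediate since $s'_j = s_j$. If $j \geq i$ with $D_i \subseteq D_j$, the predicate $\tau|_{D_i} = \sigma|_{D_i}$ is actually a predicate on $\tau|_{D_j}$, so the piecewise rule yields a genuine function of $D_j$. If $j \geq i$ with $D_i \not\subseteq D_j$, condition~1 forces $s_j$ and $f(s)_j$ to agree on every assignment with $\tau|_{D_i} = \sigma|_{D_i}$, so the piecewise rule collapses to $s'_j = s_j$ and is well-defined. The case $s = s'$ is trivial (take $h = \id$), so I may assume $s \neq s'$.

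Next, to show $h \in G\sem$, for each $s'' \in \set S(P)$ and each $\tau \in \set A(X)$ I must exhibit a $g \in G\syn$ with $g(\tau)_{h(s'')} = g(\tau_{s''})$. If $s'' \notin \{s,s'\}$, take $g = \id$. If $\tau|_{D_i} \neq \sigma|_{D_i}$, the construction of $s'$ gives $\tau_{s'} = \tau_s$, so again $g = \id$ works for both $s'' = s$ and $s'' = s'$. The remaining case $\tau|_{D_i} = \sigma|_{D_i}$ is the interesting one. For $s'' = s$, I take the witness $g \in G\syn$ provided by $f \in G\sem$ at $(s,\tau)$, giving $g(\tau)_{f(s)} = g(\tau_s)$; condition~2 then ensures $g(\tau)|_{D_i} = \sigma|_{D_i}$, and I verify $g(\tau)_{s'} = g(\tau)_{f(s)}$ componentwise, using condition~1 for the cases $j < i$ and $j \geq i$ with $D_i \not\subseteq D_j$, and the definition of $s'_j$ for $j \geq i$ with $D_i \subseteq D_j$. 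For $s'' = s'$, I invoke $f^{-1} \in G\sem$ (which lies in $G\sem$ by the previous lemma) at the pair $(f(s), \tau)$ to obtain $g \in G\syn$ with $g(\tau)_s = g(\tau_{f(s)})$, and then observe, by the same three-case analysis, that $\tau_{f(s)} = \tau_{s'}$ whenever $\tau|_{D_i} = \sigma|_{D_i}$.

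The main obstacle will be the componentwise bookkeeping in the case $\tau|_{D_i} = \sigma|_{D_i}$. The three possibilities for $j$ each require a different argument, and conditions~1 and~2 together with the explicit piecewise description of $s'_j$ have to be combined in precisely the right way so that $s'$ matches $f(s)$ on the shifted assignment $g(\tau)$. Condition~2 is what keeps this possible, since it guarantees that the shift $\tau \mapsto g(\tau)$ preserves the relation $g(\tau)|_{D_i} = \sigma|_{D_i}$, which in turn enables a second application of condition~1 at the shifted assignment.
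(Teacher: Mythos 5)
Your proposal is correct and matches the paper's proof essentially step for step: you construct $s'$ via a piecewise Skolem function, check its well-definedness with exactly the same case split (trivial for $j<i$, $D_i\subseteq D_j$ makes the guard a function of $D_j$, and condition~1 collapses $s'_j$ to $s_j$ otherwise), take $h$ to be the transposition swapping $s$ and $s'$, and verify $h\in G\sem$ by case analysis on $t$ and on whether $\tau|_{D_i}=\sigma|_{D_i}$. The only (minor, arguably cleaner) deviation is in the subcase $t=s'$ with $\tau|_{D_i}=\sigma|_{D_i}$, where you apply the $G\sem$-property of $f^{-1}$ at the pair $(f(s),\tau)$ and combine it with $\tau_{f(s)}=\tau_{s'}$ to get $g(\tau)_s=g(\tau_{s'})$ directly, whereas the paper applies it at $(s',\rho)$ and reroutes through the interpretation $f^{-1}(s')$.
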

\begin{proof}
First, we show the existence of $s'$.
To this end, for all $j = 1,\dots,k$, let $s_{j}, f(s_{j}) \in \BF(D_{j})$ denote Boolean formulas representing the $j$th Skolem function
in $s$ and $f(s)$, respectively. 

We define the $j$th component of $s'$ as the Skolem function represented by the Boolean formula:
\begin{align*}
	s'_{j} = \begin{cases}
			s_{j} & \text{ if } j < i\\
			\left( \bigwedge_{x \in D_{i}} x \leftrightarrow \sigma(x)\right) \to f(s_{j}) \wedge
			\lnot \left( \bigwedge_{x \in D_{i}} x \leftrightarrow \sigma(x)\right) \to s_{j} & \text{ if } j \geq i
 	\end{cases}
\end{align*}

We have to show that $s'_{j} \in \BF(D_{j})$ for all $j$.
For $j < i$ this is clear.
For $j \geq i$, by assumption, we have $D_{i} \subseteq D_{j}$ or $[y_{j}]_{\tau_{s}} = [y_{j}]_{\tau_{f(s)}}$ for all $\tau \in \set A(X)$ with~$\tau|_{D_{i}}=~\sigma|_{D_{i}}$.
In the first case we obviously have $s'_{j} \in \BF(D_{j})$ and in the latter case $s_{j}'$ simplifies to $s_{j}$.
So $s'$ is well-defined, and it satisfies the claimed properties by construction.

What remains to be shown is the existence of $h$.
To this end, define $h\colon\set S(P)\to\set S(P)$ by $h(s)=s'$, $h(s')=s$, and $h(t)=t$ for all $t\in\set S(P)\setminus\{s,s'\}$.
Obviously, $h$ is a bijective function.
To show that $h$ belongs to $G\sem$, we must show that for
every $t\in\set S(P)$ and every assignment $\tau\in\set A(X)$ there exists $g\in G\syn$
such that $g(\tau)_{h(t)}=g(\tau_{t})$. 
For $t\in\set S(P)\setminus\{s,s'\}$, we have $h(t)=t$, so $g$ can be chosen as the identity function, which lies in $G\syn$. 
   
For the other cases $t \in \{s,s'\}$, let $\tau \in \set A(X)$ be an assignment. 
If $\tau|_{D_{i}} \neq \sigma|_{D_{i}}$, then $\tau_{s'} = \tau_s$, i.e., $s'$ behaves like $s$,
and we can again choose $g$ as the identity function.
  
If $\tau|_{D_{i}} = \sigma|_{D_{i}}$, then $[y_{j}]_{\tau_{s}} = [y_{j}]_{\tau_{f(s)}}$ for all $j < i$ by assumption.
Therefore, $[y_{j}]_{\tau_{s'}} = [y_{j}]_{\tau_{f(s)}}$ for all those $j$.
Moreover, for $j \geq i$, we have $[y_{j}]_{\tau_{s'}} = [y_{j}]_{\tau_{f(s)}}$ by construction.
All in all, we see that $\tau_{s'} = \tau_{f(s)}$.
The assumption on $G\syn$ implies that in fact 
 \begin{align}
 \label{eq:lemma-6}
 	g(\tau)_{s'} = g(\tau)_{f(s)} \text{ for all } g \in G\syn.
\end{align}
  
 By definition of the associated group, there exists $g \in G\syn$ with $g(\tau)_{f(s)} = g(\tau_{s})$.
 This yields
 \[
  	g(\tau)_{h(s)} 
	\overset{\mathclap{\tikz \node {$\downarrow$} node [above=1ex,font=\small] {def.~of $h$};}}{=}	
	g(\tau)_{s'}
	\overset{\mathclap{\tikz \node {$\downarrow$} node [above=1ex,font=\small] {Eq.~\eqref{eq:lemma-6}};}}{=}
	g(\tau)_{f(s)}
	\overset{\mathclap{\tikz \node {$\downarrow$} node [above=1ex,font=\small] {choice of $g$};}}{=}	
	g(\tau_{s}),
 \]
  which proves the case for $s$.
  For $s'$, we obtain analogously by definition of $G\sem$, an element $g' \in G\syn$ such that
  $g'(\rho)_{f^{-1}(s')} = g'(\rho_{s'})$.
  Moreover, by applying $f^{-1}$ to both interpretations in~\eqref{eq:lemma-6} yields
 \begin{align}
 \label{eq:lemma-62}
 	g(\rho)_{f^{-1}(s')} = g(\rho)_{s} \text{ for all } g \in G\syn.
\end{align}
   Thus, ultimately we obtain
 \[
  	g'(\rho)_{h(s')} 
	\overset{\mathclap{\tikz \node {$\downarrow$} node [above=1ex,font=\small] {def.~of $h$};}}{=}	
	g'(\rho)_{s}
	\overset{\mathclap{\tikz \node {$\downarrow$} node [above=1ex,font=\small] {Eq.~\eqref{eq:lemma-62}};}}{=}
	g'(\rho)_{f^{-1}(s')}
	\overset{\mathclap{\tikz \node {$\downarrow$} node [above=1ex,font=\small] {choice of $g'$};}}{=}	
	g'(\rho_{s'}),
 \]
  This covers all cases.
  \qed
\end{proof}

\section{Conjunctive Symmetry Breakers}


We note that the following discussion is completely analogous to the case of QBFs, cf.~the beginning of~\cite[Sec.~6]{KS18}.

The elements of a syntactic symmetry group for a DQBF $P.\phi$ partition the set of 
Boolean formulas $\BF(X \cup Y)$ into disjoint subsets, so-called \emph{orbits}. 
By definition, for all formulas $\psi$ which lie in the orbit of $\phi$, the original formula $P.\phi$ and $P.\psi$ share the same models:
\begin{center}
\begin{tikzpicture}
	\path[fill=gray,opacity=0.7] (2.5,2) -- (3,0) -- (4.5,2) -- cycle;
	\draw[thick,draw] (0,0) rectangle ++(4.5,2);
	\draw[thick,draw] (0,2) -- (0.7,0);
	\draw[thick,draw] (0,0.4) -- (2.5,2);
	
	\draw[thick,draw] (2.5,2) -- (3,0);
	\draw[thick,draw] (3,0) -- (4.5,2);
	
	\draw[thick,draw] (2.2,0) -- (0.95,1);
	\draw[thick,draw] (2.2,0) -- (1.5,2);
	\draw[thick,draw] (3.75,1) -- (4.5,0.8);
	\node[align=left,anchor=west] at (5.2,1.7) {the orbit of $\phi$};
	\draw[*-] (3.1,1.7) -- (5.2,1.7);
	\node[align=left,anchor=west] at (5.2,0.6) {for any $\psi$ in this orbit,\\ the formula $P.\psi$ has\\ the same models as $P.\phi$ };
	\draw[*-] (3.1,0.6) -- (5.2,0.6);
\end{tikzpicture}
\end{center}
Therefore, for finding a model for $P.\phi$, we can replace $\phi$ by any formula in its orbit.

A semantic symmetry group for $P.\phi$, on the other hand, splits the set of interpretations $\set S(P)$ into disjoint subsets, again called orbits, 
so that each orbit either contains no models at all for $P.\phi$ or only models for $P.\phi$:
\begin{center}
\begin{tikzpicture}
	\path[fill=gray,opacity=0.7] (0.8,1) -- (0,1.4) -- (0,0) -- (0.7,0) -- cycle;
	\path[fill=gray,opacity=0.7] (2.2,0) -- (3,2) -- (4,0) -- cycle;
	\path[fill=gray,opacity=0.7] (1.5,2) -- (1.84,1.45) -- (2.1,2) -- cycle;
	\draw[thick,draw] (0,0) rectangle ++(4.5,2);
	\draw[thick,draw] (0.7,0) -- (0.9,2);
	\draw[thick,draw] (0.8,1) -- (0,1.4);
	\draw[thick,draw] (1.2,0) -- (2.1,2);
	\draw[thick,draw] (2.2,0) -- (3,2);
	\draw[thick,draw] (2.4,0.5) -- (1.5,2);
	\draw[thick,draw] (4,0) -- (3,2);
	\draw[thick,draw] (3.5,1) -- (4.5,0.8);
	\node[align=right,anchor=east] at (-0.7,1.6) {an orbit containing\\no models at all};
	\draw[-*] (-0.7,1.6) -- (0.5,1.6);
	\node[align=right,anchor=east] at (-0.7,0.5) {an orbit containing\\only models};
	\draw[-*] (-0.7,0.5) -- (0.5,0.5);
\end{tikzpicture}
\end{center}
Therefore, for finding a model for $P.\phi$, it suffices to check only one interpretation per orbit.

To goal of \emph{symmetry breaking} is to exploit this fact and to construct a Boolean formula
$\psi \in \BF(X \cup Y)$, called a \emph{(conjunctive) symmetry breaker}, in such a way that $P.\psi$ has at least one model in every orbit.
Then, instead of solving $P.\phi$, we can solve $P.(\phi \wedge \psi)$.
By Lemma~\ref{lem:truth-of-conjunction}, every model for the latter is also a model for the former.
Moreover, if $P.\phi$ has a model, then there exists a whole orbit consisting only of models. 
Thus, by construction, this orbit also contains a model of $P.(\phi \wedge \psi)$.
Ideally, we want to construct $\psi$ in such a way that $P.\psi$ contains precisely one model per orbit.
In this way, we have to inspect only one element per orbit when solving $P.(\phi \wedge \psi)$,
the one model for $P.\psi$.

\begin{definition}\label{def:symmetry-breaker}
Let $P$ be a prefix for $X$ and $Y$ and let $G\sem$ be a group of bijective functions $g\colon\set S(P) \to \set S(P)$.
A formula $\psi \in \BF(X \cup Y)$ is called a \emph{conjunctive symmetry breaker} for $G\sem$
if for every $s\in\set S(P)$ there exists $g\in G\sem$ such that $[P. \psi]_{g(s)}=\top$.
\end{definition}

\begin{remark}
\label{rem:symmetry-breaker}
When constructing symmetry breakers, one could also consider the effect of a syntactic symmetry group $G_{\syn}$ for $P. \phi$.
Such a symmetry group would allow to exchange a symmetry breaker $\psi$ by $g_{\syn}(\psi)$ for any syntactic symmetry $g_{\syn} \in G_{\syn}$.
However, since every syntactic symmetry induces a semantic symmetry (see Proposition~\ref{thm5}), we can instead also consider
an adapted group $\tilde G\sem$, which contains the associated group of~$G\syn$.
Then, the effect of any syntactic symmetry $g_{\syn} \in G_{\syn}$ can be fully captured by the elements in $\tilde G\sem$. 
Thus, we lose no generality when only considering semantic symmetries in Definition~\ref{def:symmetry-breaker}.
\end{remark}

\begin{example}
\label{ex:symmetry-breaker}
Reconsider the DQBF 
\[
	P. \phi = \forall x_{1},x_{2}\exists y_{1} \big(\{x_{1}\}\big), y_{2}\big(\{x_{2}\}\big) . \left(x_{1} \vee y_{1}\right) \wedge \left(x_{2} \vee y_{2}\right).
\]
A syntactic symmetry for $P. \phi$ is given by the admissible function $g$ that exchanges $x_{1}$ with $x_{2}$ and $y_{1}$ with $y_{2}$.
As noted in Example~\ref{ex:syntactic-to-semantic}, this syntactic symmetry induces a semantic symmetry
$f\colon \set S(P) \to S(P)$, which maps each interpretation $s = (s_{1},s_{2}) \in \set S(P)$ to 
$f(s) = (g(s_{2}), g(s_{1}))$.
We can visualize the effect of $f$ as follows.
Every interpretation in $\set S(P)$ is of the form:
\begin{center}
\begin{forest}
for tree={draw, circle, inner sep=2pt, s sep=25pt, l = 23pt}
[, phantom, s sep = 2cm
[{}, name=x1
    [{}, edge label={node[font=\small, midway,left]{$\bot$}}, name=x2
        [{}, edge label={node[font=\small,midway,left,]{$\bot$}}, name=y11
            [{}, edge label={node[font=\small,midway,left]{$\alpha$}}, name=y21
               	[{}, edge label={node[font=\small,midway,left]{$\gamma$}}]
            ]
        ]
        [{}, edge label={node[font=\small,midway,right]{$\top$}}, name=y12
            [{}, edge label={node[font=\small,midway,right]{$\alpha$}}, name=y22
                [{}, edge label={node[font=\small,midway,right]{$\delta$}}]
            ]
        ]
    ]
    [{}, edge label={node[font=\small,midway,right]{$\top$}}
        [{}, edge label={node[font=\small,midway,left]{$\bot$}}, name=y13
            [{}, edge label={node[font=\small,midway,left]{$\beta$}}, name=y23
                [{}, edge label={node[font=\small,midway,left]{$\gamma$}}]
            ]
        ]
        [{}, edge label={node[font=\small,midway,right]{$\top$}}, name=y14
            [{}, edge label={node[font=\small,midway,right]{$\beta$}}, name=y24
                [{}, edge label={node[font=\small,midway,right]{$\delta$}}]
            ]
        ]
    ]
]
]
\node (p) [left=15pt of y21] {$y_{2}$:};
\node at (p |- y11) {$y_{1}$:};
\node at (p |- x2) {$x_{2}$:};
\node at (p |- x1) {$x_{1}$:};
\path (y11) edge[dashed, bend right=20]  (y12);
\path (y13) edge[dashed, bend right=20]  (y14);
\path (y21) edge[dashed, bend right=20]  (y23);
\path (y22) edge[dashed, bend right=20]  (y24);
\end{forest}
\end{center}
The semantic symmetry $f$ exchanges the roles of $\alpha$ and $\gamma$ as well as those of~$\beta$ and $\delta$.

The formula $P.\phi$ has further semantic symmetries.
As discussed in Example~\ref{ex:semantic-symmetries}, also the functions $g_{\beta} \colon \set S(P) \to \set S(P)$, which 
replaces $\beta$ by $\lnot \beta$ and leaves everything else unchanged, and $g_{\delta} \colon \set S(P) \to \set S(P)$, which 
replaces $\delta$ by $\lnot \delta$  and leaves everything else unchanged, are semantic symmetries for $P.\phi$.
Thus, the group $G\sem = \langle f, g_{\beta}, g_{\delta} \rangle$ is a semantic symmetry group for $P.\phi$. 
 
We claim that $\psi = y_{1} \rightarrow y_{2}$ is a conjunctive symmetry breaker for $G\sem$.
To prove this, let $s \in \set S(P)$ be an arbitrary interpretation.
Note that $[P. \psi]_s = \top$ if and only if the propositional formula 
\begin{align}
\label{eq:symmetry-breaker}
	\alpha \rightarrow \gamma \; \wedge \; \alpha \rightarrow \delta \; \wedge \; 
	\beta \rightarrow \gamma \; \wedge \; \beta \rightarrow \delta,
\end{align}
with $\alpha,\beta,\gamma,\delta$ as specified by $s$, holds.

Using the symmetry $f$ and replacing $s$ by $f(s)$ if necessary,
we can, w.l.o.g., assume that $\alpha \rightarrow \gamma$ holds.
Then, using the symmetries $g_{\beta}$ and $g_{\delta}$, we can replace $s$ by
another interpretation satisfying $\beta = \bot$ and $\delta = \top$. 
Under this interpretation, the formula~\eqref{eq:symmetry-breaker} evaluates to true.
This shows that, for all $s \in \set S(P)$, 
there exists $g \in G\sem$ such that $[P.\psi]_{g(s)} = \top$.
Thus, $\psi$ is a conjunctive symmetry breaker for $G\sem$ as claimed. 
\end{example}

The following theorem is the main property of conjunctive symmetry breakers.
It generalizes the analogous result~\cite[Thm.~16]{KS18} for QBFs. 
We note that~\cite[Thm.~16]{KS18} also contains a syntactic symmetry group, 
however, as explained in Remark~\ref{rem:symmetry-breaker}, we can remove that without loss of generality.

\begin{theorem}
Let $P.\phi$ be a DQBF and let $G\sem$ be a semantic symmetry group for $P.\phi$.
If $\psi \in \BF(X \cup Y)$ is a conjunctive symmetry breaker for $G\sem$,
then $P.\phi$ is true if and only if $P.(\phi\land\psi)$ is true. 
\end{theorem}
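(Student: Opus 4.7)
The plan is to establish both implications separately, using Lemma~\ref{lem:truth-of-conjunction} as the central bridge between the truth value of a conjunction and the truth values of its parts.

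For the ``if'' direction, I would take a model $s$ of $P.(\phi\land\psi)$ and apply Lemma~\ref{lem:truth-of-conjunction} to conclude $[P.\phi]_s\land[P.\psi]_s = \top$, which forces $[P.\phi]_s=\top$, so $P.\phi$ is true. This direction is immediate and requires no use of the symmetry group at all.

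For the ``only if'' direction, I would start with a model $s \in \set S(P)$ of $P.\phi$, so $[P.\phi]_s = \top$. The definition of a conjunctive symmetry breaker (Definition~\ref{def:symmetry-breaker}) then provides some $g \in G\sem$ with $[P.\psi]_{g(s)} = \top$. Since $G\sem$ is a semantic symmetry group for $P.\phi$, Definition~\ref{def:semantic-symmetry-group} gives $[P.\phi]_{g(s)} = [P.\phi]_s = \top$. Applying Lemma~\ref{lem:truth-of-conjunction} once more, I obtain
\[
  [P.(\phi\land\psi)]_{g(s)} = [P.\phi]_{g(s)} \land [P.\psi]_{g(s)} = \top,
\]
exhibiting $g(s)$ as a model of $P.(\phi\land\psi)$ and establishing that $P.(\phi\land\psi)$ is true.

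There is really no main obstacle here: all the technical work has been done in the preceding development, particularly in showing that the associated group is a semantic symmetry group and in carefully stating what a conjunctive symmetry breaker guarantees. The only subtlety to mention is that the model witnessing the truth of $P.(\phi\land\psi)$ may differ from the original model of $P.\phi$; it is obtained by first moving along the orbit under $G\sem$ to an interpretation on which $\psi$ evaluates to true, and then invoking the semantic symmetry property to carry the model of $\phi$ along.
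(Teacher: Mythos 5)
Your proof is correct and follows essentially the same route as the paper's: the easy direction via Lemma~\ref{lem:truth-of-conjunction}, and the converse by taking a model $s$ of $P.\phi$, using the symmetry breaker to obtain $g$ with $[P.\psi]_{g(s)}=\top$, invoking the semantic symmetry group property to get $[P.\phi]_{g(s)}=\top$, and combining via Lemma~\ref{lem:truth-of-conjunction}.
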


\begin{proof}
The implication ``$\Leftarrow$'' follows immediately from Lemma~\ref{lem:truth-of-conjunction}.
For the other implication ``$\Rightarrow$'', suppose that $P.\phi$ is true.
Then there exists $s\in\set S(P)$ such that
 $[P.\phi]_s=\top$.
We have to show that there exists $t\in\set S(P)$ such that also $[P.(\phi\land\psi)]_t=\top$.
Since $\psi$ is a conjunctive symmetry breaker for~$G\sem$,
there exists $g\in G\sem$ such that
$[P.\psi]_{g(s)}=\top$.
Using Lemma~\ref{lem:truth-of-conjunction} and the fact 
that $G\sem$ is a semantic symmetry group for~$P.\phi$,
we get 
\begin{align*}
	[P.(\phi\land\psi)]_{g(s)} = [P.\phi]_{g(s)}\land [P.\psi]_{g(s)}=[P.\phi]_{s}\land [P.\psi]_{g(s)}=\top\land\top=\top.~\qed
\end{align*}
\end{proof}

\begin{example}
Reconsider the DQBF
\[
	P. \phi = \forall x_{1},x_{2}\exists y_{1} \big(\{x_{1}\}\big), y_{2}\big(\{x_{2}\}\big) . \left(x_{1} \vee y_{1}\right) \wedge \left(x_{2} \vee y_{2}\right)
\]
and the conjunctive symmetry breaker $\psi = y_{1} \rightarrow y_{2}$ from Example~\ref{ex:symmetry-breaker}.
Clearly, $P.\phi$ is true and a model is, for example, $s = (\lnot x_{1}, \lnot x_{2}) \in \set S(P)$.
Moreover, also $P. (\phi \wedge \psi)$ is true, with model $s' = (\lnot x_{1}, \top) \in \set S(P)$.
Note that $s$ is not a model for $P. (\phi \wedge \psi)$, but $s'$ can be obtained from $s$ by applying the semantic
symmetry $g_{\delta}$ from Example~\ref{ex:symmetry-breaker}.
\end{example}

\section{Construction of Symmetry Breakers}

In the following, we discuss the construction of a conjunctive symmetry breaker for a given DQBF $P.\phi$.
What is worth noting here is that such a symmetry breaker can be constructed without the explicit knowledge of a semantic symmetry group.
It suffices to know a syntactic symmetry group for $P.\phi$; the associated semantic group acts as the corresponding semantic symmetry~group.

The general idea to construct a conjunctive symmetry breaker for $P.\phi$ is the same as for QBF~\cite[Sec.~8]{KS18} and similar to the approach for SAT introduced in~\cite{symmetry-SAT}, see also \cite{Sak21}.
First, we impose an order on the set of interpretations $\set S(P)$.
Then, using the information provided by a syntactic symmetry group, we construct a formula $\psi \in \BF(X \cup Y)$ so that $P.\psi$ has (at least) the minimal element
in each orbit (of the associated semantic symmetry group) as a model.
Any such formula is, by construction, a conjunctive symmetry breaker for $P.\phi$.
The following theorem provides one way of constructing such a symmetry breaker.
It is a generalisation of the symmetry breaker construction for QBF introduced in~\cite[Thm.~21]{KS18}. 
We note that this encoding is also known as \emph{Shatter encoding}~\cite{ASM03} or \emph{lex-leader encoding}~\cite{DBBD16}.

One important subtlety that arises in the DQBF setting -- but not in SAT or QBF -- is that not all syntactic symmetries can be used 
in the construction of the symmetry breaker.
Specifically, we must exclude symmetries that simultaneously move two existential variables whose dependency sets are incomparable. 
Two dependency sets $D_{i}$, $D_{j}$ are incomparable if neither $D_{i} \subseteq D_{j}$ nor $D_{j} \subseteq D_{i}$.
This restriction is essential as shown in Example~\ref{ex:symmetry-breaker-false}.

To simplify the construction of the symmetry breaker we assume that the prefix $P$ is topologically sorted.

\begin{definition}
A prefix $P = \forall x_1,\dots,x_n \exists y_1(D_{1}),\dots, y_k(D_{k})$ is \emph{topologically sorted} if
$$
	D_{i} \subsetneq D_{j} \implies i < j,
$$
for all $i \neq j$.
\end{definition}

\begin{theorem}
\label{thm:symmetry-breaker}
Let $P = \forall x_1,\dots,x_n \exists y_1(D_{1}),\dots, y_k(D_{k})$ be a topologically sorted prefix 
and let $G\syn$ be an admissible group w.r.t.~$P$. 
Let $G \subseteq G\syn$ be such that the following conditions hold for all $g \in G$ and all $i,j \in \{1,\dots,k\}$:
\begin{enumerate}
	\item $g(x) \in \BF(D_{i})$ for all $x \in D_{i}$;\label{item:thm-1}
	\item $g(y_{i}) \in \BF(\{y_{j} \mid D_{j} = D_{i}\})$;\label{item:thm-2}
	\item If $D_{i}$ and $D_{j}$ are incomparable and $g(y_{i}) \neq y_{i}$, then
	$$
	g(y_{j}) = y_{j} \text{ and }\forall x \in D_{j} \setminus D_{i} : g(x) = x.
	$$ \label{item:thm-3}
\end{enumerate}
Then
  \[
  \psi =
  \bigwedge_{g\in G}
  \bigwedge_{i = 1}^{k}
  \biggl(
    \Bigl(
      \bigwedge_{x\in D_{i}}  \left(x \leftrightarrow g(x)\right)\; \land
      \bigwedge_{j < i} \left(y_{j} \leftrightarrow g(y_{j})\right)
    \Bigr)
    \rightarrow \Bigl(y_{i} \rightarrow g(y_{i})\Bigr)
  \biggr)
  \]
  is a conjunctive symmetry breaker for the associated group of $G\syn$. 
\end{theorem}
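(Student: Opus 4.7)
The plan is to apply the classical lex-leader technique, adapted to DQBFs. I would first equip $\set S(P)$ with a total order $\prec$ by identifying each interpretation $s$ with its tuple of truth-table values $(s_{i}(\tau))_{i,\tau}$ for $i \in \{1,\dots,k\}$ and $\tau \in \set A(D_{i})$, and comparing such tuples lexicographically: primarily by the index $i$ (which respects the topologically sorted prefix) and secondarily by a fixed total order on each $\set A(D_{i})$, using $\bot<\top$. Since $\set S(P)$ is finite, every orbit of $G\sem$ on $\set S(P)$ has a unique $\prec$-minimum. Showing that this $\prec$-minimum $s^{*}$ always satisfies $[P.\psi]_{s^{*}}=\top$ suffices to conclude that $\psi$ is a conjunctive symmetry breaker for $G\sem$.

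For the main argument I would proceed by contradiction. Assume $[P.\psi]_{s^{*}} \neq \top$; then there exist $g \in G$, $i \in \{1,\dots,k\}$ and $\sigma \in \set A(X)$ witnessing failure of one of the implications in $\psi$, i.e.\ with $[x]_{\sigma_{s^{*}}}=[g(x)]_{\sigma_{s^{*}}}$ for every $x \in D_{i}$, $[y_{j}]_{\sigma_{s^{*}}}=[g(y_{j})]_{\sigma_{s^{*}}}$ for every $j<i$, $[y_{i}]_{\sigma_{s^{*}}}=\top$, and $[g(y_{i})]_{\sigma_{s^{*}}}=\bot$. By Lemma~\ref{lemma:syntactic-to-semantic-associated-group}, $g$ induces a bijection $\set S(P) \to \set S(P)$ that lies in $G\sem$. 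Applying Lemma~\ref{lem:corrected-6} with $f$ equal to this induced semantic symmetry and with the witness $i$ and $\sigma$, I would obtain an interpretation $s' \in \set S(P)$, still in the orbit of $s^{*}$, which agrees with $s^{*}$ on every truth-table entry lex-earlier than $(i,\sigma|_{D_{i}})$ but satisfies $s'_{i}(\sigma|_{D_{i}})=[g(y_{i})]_{\sigma_{s^{*}}}=\bot$. This would give $s'\prec s^{*}$, contradicting minimality of $s^{*}$.

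Conditions (i)--(iii) of the theorem play the following role: (i) forces the premise of the implication in $\psi$ to depend only on $\sigma|_{D_{i}}$, so that the subtree-replacement is indexed by the right data; (ii) ensures that $g(y_{i})$ can also be evaluated from this same data, allowing the new Skolem value to be read off cleanly; and (iii) is the genuinely new DQBF-specific ingredient, excluding symmetries that would simultaneously affect variables with incomparable dependency sets and thereby disturb lex-earlier entries. The main expected obstacle is to verify precisely the two hypotheses of Lemma~\ref{lem:corrected-6} in this context, in particular showing that on the subtree of assignments $\tau$ with $\tau|_{D_{i}}=\sigma|_{D_{i}}$ the Skolem values of the lex-earlier existentials in $s^{*}$ and in $g(s^{*})$ actually coincide. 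This will require combining condition (iii), the topological-sort assumption, and the witness equations $[y_{j}]_{\sigma_{s^{*}}}=[g(y_{j})]_{\sigma_{s^{*}}}$ for $j<i$, likely via a short induction on $i$.
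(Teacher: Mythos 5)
Your proposal is correct and follows essentially the same route as the paper: the same lex-leader order on $\set S(P)$ (up to indexing the truth tables by $\set A(D_i)$ rather than by $\set A(X)$), the same contradiction argument at an orbit-minimal representative, and the same reliance on Lemma~\ref{lem:corrected-6} to build a strictly $\prec$-smaller interpretation in the orbit. The one subtlety you leave implicit is that hypothesis~(2) of Lemma~\ref{lem:corrected-6} quantifies over \emph{all} elements of the admissible group and does not hold for $G\syn$ itself, so the paper applies the lemma to the subgroup $\tilde G\subseteq G\syn$ of all elements fixing $\sigma|_{D_i}$ and uses Condition~\ref{item:thm-1} to verify that the witness $g$ (and hence $\langle g\rangle$) lies in $\tilde G$.
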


For proving Theorem~\ref{thm:symmetry-breaker} we need the following lemma.
\begin{lemma}\label{lemma:same-path}
Let $P = \forall x_1,\dots,x_n \exists y_1(D_{1}),\dots, y_k(D_{k})$ be a prefix 
and $g \in G\syn$. 
If $j\in \{1,\dots,k\}$ and $\sigma \in \set A(X)$ are such that
 $[x]_{\sigma} = [g(x)]_{\sigma}$ for all $x \in D_{j}$ and
$g(y_{j}) \in \BF(\{y_{l} \mid D_{j} = D_{l}\})$,
then $[g(y_{j})]_{\sigma_{s}} = [y_{j}]_{\sigma_{g(s)}}$ for all $s \in \set S(P)$.
\end{lemma}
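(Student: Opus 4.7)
The plan is to unfold the definitions to identify precisely what equality must be proved, and then to reduce it step by step to a statement that follows from the hypothesis together with admissibility. First I would apply Definition~\ref{def:g(s)}, which tells us that $\sigma_{g(s)}=g(g^{-1}(\sigma)_s)$. Evaluating this at $y_j$ yields
\[
  [y_j]_{\sigma_{g(s)}} \;=\; \sigma_{g(s)}(y_j) \;=\; [g(y_j)]_{g^{-1}(\sigma)_s},
\]
so the claim of the lemma is equivalent to the identity $[g(y_j)]_{\sigma_s}=[g(y_j)]_{g^{-1}(\sigma)_s}$.

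The next step is to use the hypothesis $g(y_j)\in\BF(\{y_l\mid D_l=D_j\})$. Since $g(y_j)$ only mentions variables $y_l$ with $D_l=D_j$, the two sides above coincide as long as $\sigma_s(y_l)=g^{-1}(\sigma)_s(y_l)$ for every such $y_l$. By Definition~\ref{def:induced-assignment}, $\sigma_s(y_l)=s_l(\sigma|_{D_l})=s_l(\sigma|_{D_j})$ and likewise $g^{-1}(\sigma)_s(y_l)=s_l(g^{-1}(\sigma)|_{D_j})$, so it suffices to establish the single equality $\sigma|_{D_j}=g^{-1}(\sigma)|_{D_j}$. (If $g(y_j)$ contains no existential variable at all, then both sides of the claimed identity equal the same constant and there is nothing to show.)

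The delicate point is this last reduction, since the hypothesis controls $g$ on $D_j$ whereas we need to control $g^{-1}$. This is where admissibility enters. Fix $x\in D_j$ and assume $g(y_j)$ is non-constant. Then $g(y_j)$ contains some $y_l$ with $D_l=D_j\ni x$, i.e., $g(y_j)$ depends on $x$ in the sense of Section~\ref{sec:syntactic-symmetries}. Condition~\ref{item:admissible-3} of Definition~\ref{def:admissible-function} therefore gives $g^{-1}(x)\in\BF(D_j)$, so $[g^{-1}(x)]_\tau$ is determined by $\tau|_{D_j}$ alone. The hypothesis $[x]_\sigma=[g(x)]_\sigma$ on $D_j$ rephrases as $\sigma|_{D_j}=g(\sigma)|_{D_j}$, hence $[g^{-1}(x)]_\sigma=[g^{-1}(x)]_{g(\sigma)}$. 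Finally, preservation of propositional satisfiability together with Lemma~\ref{lem:group-action-on-assignments} (applied with $h=g$, $h'=g^{-1}$) yields
\[
  [g^{-1}(x)]_{g(\sigma)} \;=\; [x]_{g^{-1}(g(\sigma))} \;=\; [x]_\sigma \;=\; \sigma(x),
\]
so $g^{-1}(\sigma)(x)=\sigma(x)$ for all $x\in D_j$. Combining the three reductions proves the lemma.
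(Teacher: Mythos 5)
Your proof is correct, but it takes a genuinely different route from the paper's. The paper applies Lemma~\ref{lemma:syntactic-to-semantic} (which already absorbs the $g^{-1}$ appearing in Definition~\ref{def:g(s)}) to obtain $[g(y_j)]_{\sigma_s}=[y_j]_{g(\sigma_s)}=[y_j]_{g(\sigma)_{g(s)}}$, and then concludes by observing that $g(\sigma)$ and $\sigma$ agree on $D_j$ (immediate from the first hypothesis) and that $[y_j]_{\tau_{g(s)}}$ depends only on $\tau|_{D_j}$. You instead unfold Definition~\ref{def:g(s)} at $\sigma$ directly, which gives $[y_j]_{\sigma_{g(s)}}=[g(y_j)]_{g^{-1}(\sigma)_s}$, and you therefore have to compare $\sigma$ with $g^{-1}(\sigma)$ on $D_j$ rather than with $g(\sigma)$. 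That comparison is not immediate from the hypothesis and forces the extra detour through admissibility condition~\ref{item:admissible-3} and Lemma~\ref{lem:group-action-on-assignments}, which you carry out correctly. The paper's route is shorter precisely because Lemma~\ref{lemma:syntactic-to-semantic} cancels the $g^{-1}$; yours is a bit longer but makes explicit how condition~\ref{item:admissible-3} enters, which is somewhat hidden in the paper's presentation (there it is buried in the well-definedness argument for $g(s)$). One small imprecision: where you write ``assume $g(y_j)$ is non-constant,'' the case split should really be on whether $g(y_j)$ \emph{syntactically contains} some existential variable $y_l$, since a formula like $y_l\lor\lnot y_l$ is semantically constant yet triggers condition~\ref{item:admissible-3}; the argument still works under either reading, but the syntactic formulation is the one your subsequent sentence actually needs.
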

\begin{proof}
Fix $s \in \set S(P)$.
The admissibility of~$g$ and Lemma~\ref{lemma:syntactic-to-semantic} imply that $[g(y_{j})]_{\sigma_{s}} = [y_{j}]_{g(\sigma)_{g(s)}}$.
By the second assumption, the evaluation of $y_{j}$ under $g(s)$ only depends on the variables in $D_{j}$.
With this, the assertion of the lemma follows from the fact that $\sigma$ and $g(\sigma)$ agree on those variables by assumption.
\qed
\end{proof}

\begin{proof}[of Theorem~\ref{thm:symmetry-breaker}]
Fix an arbitrary order $<$ on the set of assignments $\set A(X)$.
  On the set of interpretations $\set S(P)$, define an order $s < s'$
  for $s = (s_1,\dots,s_k)$ and $s' = (s_1',\dots,s_k')$, if
  $s \neq s'$ and for the smallest index $i \in \{1,\dots,k\}$ with $s_i\neq s'_i$ 
  and the smallest assignment 
  $\sigma \in \set A(X)$ with $[y_i]_{\sigma_s}\neq[y_i]_{\sigma_{s'}}$, we have
  $[y_i]_{\sigma_s}=\bot$ and $[y_i]_{\sigma_{s'}}=\top$.

  Let $s_0\in\set S(P)$. 
  We need to show that there is $g\sem\in G\sem$ such that $[\psi]_{g\sem(s_0)}=\top$.
  To this end, let $g\sem$ be such that
  $s\coloneqq g\sem(s_0)$ is as small as possible in the order defined above. 
  Note that such a choice of $s$ is always possible since the set of interpretations $\set S(P)$ is finite.
  
  We show that $[\psi]_{s}=\top$.  
  Assume, for contradiction, that $[\psi]_{s}=\bot$.
  Then there exists an assignment $\sigma \in \set A(X)$ such that $[\psi]_{\sigma_{s}} = \bot$.
  In particular, there exist $g\in G$ and $y_{i} \in Y$ 
  satisfying the following properties:  
  \begin{enumerate}[label=(\alph*)]
  	\item $[x]_{\sigma} = [g(x)]_\sigma$ for all $x \in D_{i}$;\label{item:property-1-new}
	\item $[y_{j}]_{\sigma_{s}}=[g(y_{j})]_{\sigma_{s}}$ for all $j < i$;
	\item $[y_{i}]_{\sigma_{s}}=\top\neq\bot=[g(y_{i})]_{\sigma_{s}}$.
  \end{enumerate}
  
  The third condition on $G$ (Cond.~\ref{item:thm-3}), together with Lemma~\ref{lemma:same-path}, implies that we can shift the application of $g$ to the interpretation~$s$.
  We can thus rewrite the last two conditions above to
    \begin{enumerate}[label=(\alph*)]
   \setcounter{enumi}{1}
	\item $[y_{j}]_{\sigma_{s}}= [y_{j}]_{\sigma_{g(s)}}$ for all $j < i$;\label{item:property-2-new}
	\item $[y_{i}]_{\sigma_{s}}=\top\neq\bot=[y_{i}]_{\sigma_{g(s)}}$.\label{item:property-3-new}
  \end{enumerate}
  
 We fix $i$ and~$\sigma$ satisfying the three conditions above.
 We may assume that~$i$ is minimal and that $\sigma$ is minimal with respect to the order fixed at the beginning
 (among all $\sigma$'s that qualify for the chosen~$i$). 
  
We will now construct another interpretation $s' = g'\sem(s_{0})$ for a suitable $g'\sem \in G\sem$
satisfying $s' < s$, which will contradict the minimality of $s$.
  
By Lemma~\ref{lemma:syntactic-to-semantic-associated-group}, the map $f\colon s \mapsto g(s)$ lies in $G\sem$.    
We apply Lemma~\ref{lem:corrected-6} to $\sigma$, the semantic symmetry $f$, the interpretation $s$, 
the index $i$, and the subgroup $\tilde G \subseteq G\syn$ consisting of all $\tilde g\in G\syn$ such that
	$$
		\tau|_{D_{i}} = \sigma|_{D_{i}} \Rightarrow \tilde g(\tau)|_{D_{i}} = \tau|_{D_{i}}
	$$
for all $\tau \in \set A(X)$.
The first assumption on $G$ implies that $\langle g\rangle \subseteq \tilde G$.

We show that Lemma~\ref{lem:corrected-6} is indeed applicable in this situation:
The second condition follows directly from the choice of $\tilde G$.
For the first condition, let $\tau \in \set A(X)$ with $\tau|_{D_{i}} = \sigma|_{D_{i}}$.
We have to show that $[y_{j}]_{\tau_{s}} = [y_{j}]_{\tau_{f(s)}}$ if $j < i$ or $D_{i} \not\subseteq D_{j}$.
If $j < i$, then either $D_{j} \subseteq D_{i}$ or $D_{i}$ and $D_{j}$ are incomparable, because $P$ is topologically sorted.
In the first case, we have
$$
	[y_{j}]_{\tau_{s}} 
	\overset{\mathclap{\tikz \node {$\downarrow$} node [above=1ex,font=\small] {$\tau|_{D_{i}} = \sigma|_{D_{i}}$};}}{=}	 
	[y_{j}]_{\sigma_{s}} 
	\underset{\mathclap{\tikz \node {$\uparrow$} node [below=1ex,font=\small] {Cond.~\ref{item:property-2-new}};}}{=}	 
	[y_{j}]_{\sigma_{g(s)}} 
	\overset{\mathclap{\tikz \node {$\downarrow$} node [above=1ex,font=\small] {Def.~$f$};}}{=}	 
	[y_{j}]_{\sigma_{f(s)}} 
	\underset{\mathclap{\tikz \node {$\uparrow$} node [below=1ex,font=\small] {$\tau|_{D_{i}} = \sigma|_{D_{i}}$};}}{=}	 
	[y_{j}]_{\tau_{f(s)}}.
$$
If $D_{i}$ and $D_{j}$ are incomparable, 
then the third condition on $G$ (Cond.~\ref{item:thm-3}), together with Lemma~\ref{lemma:same-path}, yields the desired result.
If $j \geq i$ and $D_{i} \not\subseteq D_{j}$, then $D_{i}$ and $D_{j}$ have to be incomparable; thus again Cond.~\ref{item:thm-3}, 
together with Lemma~\ref{lemma:same-path}, yields the result.
 
Thus, Lemma~\ref{lem:corrected-6} is applicable and yields an interpretation $s' \in \set S(P)$ such that
\begin{align*}
	s'_{j} = \begin{cases}
		s_{j} & \text{ for all } j < i \text{ and all assignments} \\
		s_{j} & \text{ for all } j \geq i \text{ and all assignments } \tau|_{D_{i}} \neq \sigma|_{D_{i}} \\
		f(s_{j}) &  \text{ for all } j \geq i \text{ and all assignments } \tau|_{D_{i}} = \sigma|_{D_{i}} \\
	\end{cases}
\end{align*}
Furthermore, there exists $h \in G\sem$ with $h(s)=s'$.

By construction, the functions in the $j$th component of $s$ and $s'$ agree for all $j < i$.
Furthermore, at the $i$th component, we have $[y_{i}]_{\tau_s}=[y_{i}]_{\tau_{s'}}$
for all $\tau<\sigma$ by the minimality of~$\sigma$ and the construction of~$s'$.
Finally, we have $[y_{i}]_{\sigma_s}=\top\neq\bot=[y_{i}]_{\sigma_{s'}}$.
Therefore, $s'<s$, in contradiction to the minimality of~$s$.
\qed
\end{proof}

The three technical assumptions on the syntactic symmetries in Theorem~\ref{thm:symmetry-breaker} may appear restrictive, 
but they are necessary for the correctness of the construction, as the following example shows.
We also point out that the first two conditions are not unique to the DQBF setting -- they already appear in the QBF setting.
There, they are included in the definition of admissibility~\cite[Def.~3]{KS18} and are therefore implicitly required in the corresponding symmetry breaker construction.
Since these restrictions are not required for any of the earlier results in this work, we chose to state them explicitly only in Theorem~\ref{thm:symmetry-breaker}.
Also note that the third condition is trivially satisfied for QBFs, as all dependency sets in a QBF are pairwise comparable.
Therefore, Theorem~\ref{thm:symmetry-breaker} specializes for QBFs to the construction from~\cite[Thm.~21]{KS18}.

\begin{example}\label{ex:symmetry-breaker-false}
Consider the DQBF
$$
	P. \phi = \forall x_{1}, x_{2}, x_{3} \exists y_{1} (\{x_{2}\}),  y_{2} (\{x_{3}\}) .  (x_{1} \leftrightarrow y_{1})\lor (x_{2} \leftrightarrow y_{2}) \lor (x_{1} \leftrightarrow x_{3}).
$$
Note that this formula is true; a model is given by $s = (x_{2}, x_{3})$.

A syntactic symmetry for $P.\phi$ is given by the admissible function $g$ that flips
both $x_{2}$ and $y_{2}$ to $\neg x_{2}$ and $\neg y_{2}$, respectively, while leaving all other variables unchanged.
Note that $g$ violates the third condition of Theorem~\ref{thm:symmetry-breaker}: $D_{1} = \{x_{2}\}$ and $D_{2} = \{x_{3}\}$ are incomparable
and $g(y_{2}) \neq y_{2}$, yet for $x_{2} \in D_{1} \setminus D_{2}$ we have $g(x_{2}) \neq x_{2}$.
Therefore, $g$ is not permissible for the construction of the symmetry breaker.

If we were to use $g$ in the construction of Theorem~\ref{thm:symmetry-breaker}, we would obtain the formula
\begin{align*}
	\left(x_{2} \leftrightarrow g(x_{2})\right) &\rightarrow (y_{1} \rightarrow g(y_{1}))~\land \\
	\left(x_{3} \leftrightarrow g(x_{3}) \land (y_{1} \leftrightarrow g(y_{1})\right) &\rightarrow (y_{2} \rightarrow g(y_{2})),
\end{align*}
which simplifies to $\neg y_{2}$.
Appending this formula to $P. \phi$ to obtain $P. (\phi \land \neg y_{2})$ flips the truth value, resulting in a false formula.
Therefore, $\neg y_{2}$ cannot be a symmetry breaker for $P.\phi$.
\end{example}

Note that if a formula $\psi_{1} \wedge \psi_{2}$ is a conjunctive symmetry breaker, then so are $\psi_{1}$ and $\psi_{2}$.
Therefore, when constructing the symmetry breaker from Theorem~\ref{thm:symmetry-breaker}, we are free to limit the outermost conjunction
to a subset of the elements from $G$.
This can be beneficial in situations where the syntactic symmetry group contains a lot of elements, as it often happens in practice.
In such cases, picking $G$ as a (sub-)set of generators for $G\syn$ can help maintain a manageable formula size.

Like in the case of SAT or QBF, also DQBF solvers typically expect their input to be in conjunctive normal form (CNF).
Recall that a DQBF $P. \phi$ is in CNF if $\phi$ is a disjunction of clauses, where a clause is a conjunction of literals, and a literal is either a variable or its negation.
While the symmetry breaker from Theorem~\ref{thm:symmetry-breaker} as presented is not in CNF, it can be readily encoded in this form.
To this end, we generalize the encoding from~\cite[Sec.~8]{KS18} for QBFs, which, in turn, is based on the propositional case~\cite{DBBD16,Sak21}.

Fix a topologically sorted prefix $P = \forall x_1,\dots,x_n \exists y_1(D_1),\dots, y_k(D_k)$ for $X$ and $Y$.
First, we note that the subformula 
\[
    \Bigl(
      \bigwedge_{x\in D_i} (x \leftrightarrow g(x))\land
      \bigwedge_{j<i} (y_j \leftrightarrow g(y_j))
    \Bigr)
    \rightarrow (y_i \rightarrow g(y_i))
\]
of $\psi$ in Theorem~\ref{thm:symmetry-breaker} is logically equivalent to 
\begin{align}\label{eq:subformula-cnf}
    \Bigl(
      \bigwedge_{x\in D_1\cup\cdots\cup D_i} (x \leftrightarrow g(x))\land
      \bigwedge_{j<i} (y_j \leftrightarrow g(y_j))
    \Bigr)
    \rightarrow (y_i \rightarrow g(y_i)).
\end{align}
This equivalence follows from the structure of the topologically sorted prefix. 
For all $j < i$, either $D_{j} \subseteq D_{i}$, or $D_{i}$ and $D_{j}$ are incomparable.
In the latter case, the third condition of Theorem~\ref{thm:symmetry-breaker} on the symmetry $g$ implies that
either 
\begin{itemize}
	\item $g(y_{i}) = y_{i}$, in which case the right-hand side of the implication is trivially true, making
	the entire formula a tautology; or
	\item $g(x) = x$ for all $x \in D_{j} \setminus D_{i}$, in which case the additional equivalences 
$x \leftrightarrow g(x)$ introduced from $D_{j}$ are tautologies.

\end{itemize}
Thus, it is safe to replace $\bigwedge_{x\in D_i} (x \leftrightarrow g(x))$ by $\bigwedge_{x\in D_1\cup\cdots\cup D_i} (x \leftrightarrow g(x))$
in the construction without changing the semantics of the formula.
We will exploit this in the following.

To simplify the subsequent discussion, we introduce the following order of the propositional variables in $X \cup Y$:
\begin{center}
  \begin{tikzpicture}[start chain=going right,
  node distance = 0mm,
    narrowbox/.style = {draw, minimum size=1cm, outer sep = 0mm, on chain, thin, text centered, minimum height=\ht\strutbox+\dp\strutbox,
      inner ysep=2pt, font=\small},
      var/.style = {minimum size=0.5cm, outer sep = 0mm, on chain, text width=1cm, text centered, minimum height=\ht\strutbox+\dp\strutbox,
      inner ysep=2pt, font=\small},
    ]

      \node[narrowbox, anchor=west, text width = 1.5cm] at (0,0) (D1) {$D_{1}$};
      \node[var,text width=0.5cm] {$, y_{1},$};
      \node[narrowbox, text width = 1.5cm] {$D_{2} \setminus D_{1}$}; 
      \node[var,text width=1.0cm] {$, y_{2}, \dots, $};
      \node[narrowbox, text width = 2cm] {$D_{i} \setminus (\bigcup_{j < i}D_{j})$}; 
      \node[var,text width=1.0cm] {$, y_{i}, \dots, $};
       \node[narrowbox, text width = 2.1cm] {$D_{k} \setminus (\bigcup_{j < k}D_{j})$}; 
      \node[var, text width=0.3cm]  {$, y_{k}$};
  \end{tikzpicture}
\end{center}

Within each block   \begin{tikzpicture}[baseline=(block.base),
    narrowbox/.style = {draw, minimum size=1cm, outer sep = 2mm, thin, text centered, minimum height=\ht\strutbox+\dp\strutbox,
      inner ysep=2pt, font=\small},
    ]
      \node[narrowbox, text width = 2cm] (block) {$D_{i} \setminus (\bigcup_{j < i}D_{j})$}; 
  \end{tikzpicture}
of universal variables, we assume an arbitrary but fixed order.
We denote by $v_{j}$ the $j$th variable in this sequence, for $j \in \{1,\dots,n+k\}$.

Using this order, we can, for $g \in G\syn$ and $i \in \{1,\dots,k\}$, write the formula~\eqref{eq:subformula-cnf} as
\[
	    \Bigl(
      \bigwedge_{j = 1}^{d_{i}-1} (v_{j} \leftrightarrow g(v_{j}))
    \Bigr)
    \rightarrow (v_{d_{i}} \rightarrow g(v_{d_{i}})),
\]
where $d_{i} = |D_1\cup\cdots\cup D_i| + i$.

Now, with a set of new variables $\{z^{g}_{0},\dots, z^{g}_{n+k-1}\}$, we recursively encode the antecedent
 of the outer implication above by setting
 \[
 	z_{j}^{g} \leftrightarrow (z_{j-1}^{g} \wedge (v_j \leftrightarrow g(v_j))
 \]
 for $j \in \{1,\dots,n+k-1\}$ and assuming the base case $z_{0}^{g}$ to be true.
Thus, the variable $z_{j}^{g}$ encodes that $v_l$ and $g(v_l)$ are equivalent for all $1 \leq l \leq j$.
With this, $\psi$ is equivalent to the formula
\begin{align}
	&z_{0}^{g} \; \land \\
	\bigwedge_{j=1}^{n+k-1} &\Bigl( z_{j}^{g} \leftrightarrow \bigl(z_{j-1}^{g} \wedge (v_{j} \leftrightarrow g(v_{j}))\bigr) \Bigr) \; \land \label{eq:2} \\
	\bigwedge_{i=1}^{k} &\Bigl(z_{d_{i}-1}^{g} \rightarrow (v_{d_{i}} \rightarrow g(v_{d_{i}}))\Bigr), \label{eq:3}
\end{align}
where again $d_{i} = |D_1\cup\cdots\cup D_i| + i$.

Before translating this formula into CNF, we note that the subformula~\eqref{eq:3} can be used to simplify the conjunction~\eqref{eq:2}.
In particular, for each $j$, the outer equivalence in~\eqref{eq:2} can be replaced by an implication $\leftarrow$,
and if $v_{j}$ appears in~\eqref{eq:3}, then also the inner equivalence can be replaced by an implication $\leftarrow$.
For further details, see the proof of~\cite[Thm.~1]{DBBD16}.
Note that $v_{j}$ appears in~\eqref{eq:3} if and only if $v_{j} \in Y$. 
With this, the CNF encoding of the symmetry breaker from Theorem~\ref{thm:symmetry-breaker} 
is given by the conjunction of the following formula for all suitable $g \in G\syn$:
\begin{align*}
	&z_{0}^{g} \; \land \\
	\bigwedge_{\substack{j=1\\ v_{j} \in X}}^{n+k-1} &
	\Bigl( 
	\bigl(
		z_{j}^{g} \lor \lnot z_{j-1}^{g} \lor v_{j} \lor g(v_{j})
	\bigr)
	\wedge 
	\bigl(
		z_{j}^{g} \lor \lnot z_{j-1}^{g} \lor \lnot v_{j} \lor \lnot g(v_{j})
	\big) \Bigr) \; \land  \\
	\bigwedge_{\substack{j=1\\ v_{j} \in Y}}^{n+k-1} &
	\Bigl( 
	\bigl(
		z_{j}^{g} \lor \lnot z_{j-1}^{g} \lor \lnot v_{j}
	\bigr)
	\wedge
	\bigl(
		z_{j}^{g} \lor \lnot z_{j-1}^{g} \lor  g(v_{j})
	\bigr)
	\Bigr) \; \land \\
	\bigwedge_{i=1}^{k} &\bigl(\lnot z_{d_{i}-1}^{g} \lor \lnot v_{d_{i}} \lor g(v_{d_{i}})\bigr).
\end{align*}

When using this encoding, the prefix $P$ has to be extended with the existential variables 
$z^{g}_{0},\dots, z^{g}_{n+k-1}$ for all used $g \in G\syn$.
The dependency set of $z^{g}_{i}$ is given by $\{v_{j} \in X \mid j \leq i\}$.

\section{Detection of Symmetries}
\label{sec:detection}

Theorem~\ref{thm:symmetry-breaker} allows to construct, starting from a syntactic symmetry group, a symmetry breaker for a DQBF $P.\phi$.
In order to use this theorem in practical applications, we still need a way to obtain a syntactic symmetry group for~$P.\phi$.
For some problems, such a group can be directly inferred from the problem description itself. 
An example for this is the search for new matrix multiplication algorithms, which can be encoded as a propositional satisfiability problem~\cite{HKS21}.
In this case, all symmetries of the encoding arise naturally from the properties of matrix multiplication.
Without such domain-specific insights, it is possible to detect certain syntactic symmetries based solely on the encoded formula. 
The classical way to do this is via a graph encoding of the formula and the subsequent use of tools to detect graph automorphisms.

While we have presented our theory of DQBF symmetries in the previous sections in full generality, allowing arbitrary admissible functions
as syntactic symmetries, we have to restrict ourselves now to detecting permutations of literals, 
as we are not aware of any techniques that allow to detect more general symmetries. 
Thus, using the methods described in this section, we can, in general, not find all symmetries of a given DQBF, but only those that 
arise from literal permutations. 
It remains an open problem how to detect more general symmetries.
Everything that we do find with our approach however, is indeed a syntactic symmetry of the given formula.

To detect symmetries of DQBFs in conjunctive normal form, we introduce a representation of DQBFs as undirected, colored graphs. 
Based on these graphs we can employ tools like \saucy\footnote{\url{http://vlsicad.eecs.umich.edu/BK/SAUCY/}} to detect the symmetries. 
This is also the standard approach for detecting symmetries in SAT~\cite{Sak21}.  
In this encoding, also the different types of quantifiers as well as the dependencies have to be taken into account. 
The DQBF is translated into a colored graph as follows.

\begin{definition}
	Let 
	\[
		\Phi = \forall x_1, \ldots, x_n \exists y_1(D_1), \ldots y_k(D_k) . \bigwedge_{i=1}^{d} C_{i}
	\] 
	be a DQBF in CNF, that is, $C_{1},\dots,C_{d}$ are clauses, with universal variables $X = \{x_{1},\dots,x_{n}\}$ 
	and existential variables $Y = \{y_{1},\dots,y_{k}\}$.
	The \emph{DQBF graph} $(V, E, f)$ of $\Phi$
	is a directed colored graph with vertices $V$, edges $E$, and coloring $f \colon V \to \{1, 2, 3\}$. 
	The set of vertices $V = X \cup Y \cup L \cup C$ is composed 
	of the disjoint sets
	\begin{enumerate}
		\item variables nodes $X \cup Y$, 
		\item literal nodes nodes $L = \bigcup_{v \in X \cup Y} \{+v, -v\}$,
		\item clause nodes $C = \{C_{1},\dots,C_{d}\}$. 
	\end{enumerate}
	The coloring $f \colon V \to \{1, 2, 3\}$ is defined as follows:
	\begin{equation*}
  	f(v) = \begin{cases}
		1 & \text{if $v \in \bigcup_{x \in X} \{x, +x, -x\}$}\\
		2 & \text{if $v \in \bigcup_{y \in Y} \{y, +y, -y\}$}\\
		3 & \text{if $v \in C$} 
  	\end{cases}
	\end{equation*}
	Finally, the set of edges $E = E_v \cup E_d \cup E_c$ 
	is defined by 
	\begin{enumerate}
		\item variable edges $E_v = \bigcup_{v \in X \cup Y} \{(v, +v), (v, -v), (+v, -v), (-v, +v)\}$,
		\item dependency edges $E_d= \{ (y_i, x) \mid x \in D_i, i = 1,\dots,k\}$,
		\item clause occurrence edges $E_c = \{ (c, l) \mid c \in C, l \in L, l\text{ appears in }c\}$.
	\end{enumerate}
\end{definition}

In the graph, we distinguish between variable nodes $X \cup Y$, 
literal nodes~$L$ that represent the positive and negative literal of 
a variable, and clause nodes~$C$ that represent the different clauses of the formula.
With the coloring, we partition the nodes 
in universal variables and literals (color $1$), existential variables 
and literals (color $2$), and clause nodes (color $3$). 
This coloring ensures that only nodes of the right type are matched by the symmetry 
detection algorithm. 
Note that a variable and its two corresponding literal nodes are colored in the same color, which indicates the type of quantification. 
The existential variables are connected to the universal variables on 
which they depend. 

\begin{example}
The DQBF
\[
	\forall x_{1}, x_{2} \exists y_{1}\big(\{x_{1},x_{2}\}\big) . 
						\left(x_{1} \lor x_{2} \lor y_{1}\right) \wedge 
						\left(\lnot x_{1} \lor \lnot x_{2} \lor y_{1}\right) \wedge 
						\left(x_{1} \lor x_{2} \lor \lnot y_{1}\right)
\]
has the following DQBF graph.

\begin{center}
\scalebox{0.80}{
\begin{tikzpicture}
\tikzset{var/.style={regular polygon, regular polygon sides=5, draw, minimum size=1cm}}
\tikzset{lit/.style={circle, draw, minimum size=0.8cm}}
\tikzset{clause/.style={rectangle, draw, minimum size=.8cm}}

\node[var] (x2) at (0,0) {$x_{2}$};
\node[var] (x1) at ($(x2) + (-4,0)$) {$x_{1}$};
\node[var,fill=lightgray] (y1) at ($(x2) + (4,0)$) {$y_{1}$};

\node[lit] (x1p) at ($(x1) + (-0.9,-1.6)$) {$+x_{1}$};
\node[lit] (x1m) at ($(x1) + (0.9,-1.6)$) {$-x_{1}$};
\node[lit] (x2p) at ($(x2) + (-0.9,-1.6)$) {$+x_{2}$};
\node[lit] (x2m) at ($(x2) + (0.9,-1.6)$) {$-x_{2}$};
\node[lit, fill=lightgray] (y1p) at ($(y1) + (-0.9,-1.6)$) {$+y_{1}$};
\node[lit, fill=lightgray] (y1m) at ($(y1) + (0.9,-1.6)$) {$-y_{1}$};

\node[clause, fill=gray] (c1) at ($(x1) + (0,-4)$) {$x_{1} \lor x_{2} \lor y_{1}$};
\node[clause, fill=gray] (c2) at ($(x2) + (0,-4)$) {$\lnot x_{1} \lor \lnot x_{2} \lor y_{1}$};
\node[clause, fill=gray] (c3) at ($(y1) + (0,-4)$) {$x_{1} \lor x_{2} \lor \lnot y_{1}$};

\draw[->,dashed] (y1) to[out=90, in=90, looseness=0.8] (x1);
\draw[->,dashed] (y1) to[out=90, in=90, looseness=0.8] (x2);

\draw[->,densely dotted] (x1) to (x1p);
\draw[->,densely dotted] (x1) to (x1m);
\draw[->,densely dotted] (y1) to (y1m);
\draw[->,densely dotted] (x2) to (x2p);
\draw[->,densely dotted] (x2) to (x2m);
\draw[->,densely dotted] (y1) to (y1p);
\draw[->,densely dotted] (y1) to (y1m);

\draw[<->,densely dotted] (x1p) to (x1m);
\draw[<->,densely dotted] (x2p) to (x2m);
\draw[<->,densely dotted] (y1p) to (y1m);

\draw[->] (c1) to (x1p);
\draw[->] (c1) to (x2p);
\draw[->] (c1) to (y1p);

\draw[->] (c2) to (x1m);
\draw[->] (c2) to (x2m);
\draw[->] (c2) to (y1p);

\draw[->] (c3) to (x1p);
\draw[->] (c3) to (x2p);
\draw[->] (c3) to (y1m);

\end{tikzpicture}
}
\end{center}

In the illustration, we distinguish between the different node types by using various shapes: 
variable nodes are represented as pentagons, literal nodes as circles, and clause nodes as rectangles. 
Similarly, the different edge types are differentiated by distinct line styles: 
variable edges are shown with dotted arrows, dependency edges with dashed arrows, and clause occurrence edges with solid arrows.
Finally, the coloring is illustrated by different shades of the nodes: 
universal nodes are white, existential nodes are in a lighter gray, and clause nodes are in a darker gray.
\end{example}

There is a one-to-one correspondence between automorphisms of the DQBF graph of a formula $P.\phi$ and 
those syntactic symmetries of $P.\phi$ that arise from literal permutations.
In particular, every syntactic symmetry of the formula that is a permutation of literals corresponds to an automorphism of the graph and vice versa. 
Thus, by computing the automorphism group of the graph, we can obtain a syntactic symmetry group of the DQBF.

\section{Experimental Evaluation}

We have implemented the translation of DQBFs into graphs in a tool called \dqsym\footnote{Available at \url{https://github.com/marseidl/dqsym}} that can process 
formulas in the DQDIMACS format. This format is a more general 
version of the QDIMACS format and it allows for the explicit specification 
of quantifier dependencies. Our tool is able to process both QBFs and 
DQBFs in prenex conjunctive normal form (PCNF), 
which is also the supported format of most state-of-the-art (D)QBF solvers. 
In particular, \dqsym translates a DQBF formula given in DQDIMACS format into 
its DQBF graph and then uses \saucy~\cite{saucy} to compute the automorphism group of this graph, 
thereby computing a syntactic symmetry group of the formula. 

Using the symmetries found by \dqsym, we then construct symmetry breakers according to 
Theorem~\ref{thm:symmetry-breaker} and evaluate their impact on solving time.
We note that the symmetries found by \dqsym need not necessarily satisfy the conditions required by Theorem~\ref{thm:symmetry-breaker}.
For constructing the symmetry breakers, we filter out those that violate these conditions and only use the eligible ones.

For DQBF solving, we evaluate the three solvers \dqbdd~\cite{dqbdd}, \pedant~\cite{pedant}, and \hqs~\cite{hqs}, which 
were ranked top in the last QBFGallery, while for QBF solving we consider 
the two QBF solvers \caqe~\cite{caqe} and \depqbf~\cite{depqbf}. The experiments 
were run on a 
cluster of dual-socket AMD EPYC 7313 @ \SI{3.7}{\giga\hertz} machines
running Ubuntu 24.04
with a timeout of \SI{600}{\second} and a memory limit of 
\SI{8}{\giga\byte}.

We have applied our symmetry detection to the QBFs from the PCNF track
and to the formulas of the DQBF track used in the QBFGallery 2023, the 
most recent QBF competition event.\footnote{\url{https://qbf23.pages.sai.jku.at/gallery/}}
The QBF set contains $377$ formulas and the DQBF set contains 
$354$ formulas. 

We first discuss the DQBF benchmarks.
For each of these formulas, generating the graph encoding and 
detecting the symmetries took less than a second. The sizes of the 
symmetry groups are shown on the left of Figure~\ref{fig:ssize}.
In particular, the figure presents a histogram showing the number
of instances with group size of at most $10^{0}, 10^{1}, 10^{2}$, and $10^{3}$, respectively,
as well as those with group size greater than $10^{3}$.
Note that we report the size of the whole group (as computed by \saucy) and not just the size of a generating set.

\begin{figure}
\centering
\begin{minipage}[b]{.45\linewidth}
\begin{tikzpicture}
\begin{axis}[
ybar,
ymin=0,
ymax=220,
bar width=15pt,
width=1.15\linewidth,
xlabel={DQBF-Formulas},
xtick={1,2,3,4,5},
xticklabels={$10^{0}$, $10^{1}$, $10^{2}$, $10^{3}$, $> 10^{3}$},
nodes near coords={\pgfmathprintnumber\pgfplotspointmeta},
nodes near coords style={font=\scriptsize}
]
\addplot[] table [x, y, col sep=space] {
1 190
2 117
3 30
4 3
5 14
};
\end{axis}
\end{tikzpicture}
\end{minipage}
\begin{minipage}[b]{.45\linewidth}
\begin{tikzpicture}
\begin{axis}[
ybar,
ymin=0,
ymax=220,
bar width=15pt,
width=1.15\linewidth,
xlabel={QBF-Formulas},
xtick={1,2,3,4,5},
xticklabels={$10^{0}$, $10^{1}$, $10^{2}$, $10^{3}$, $> 10^{3}$},
nodes near coords={\pgfmathprintnumber\pgfplotspointmeta},
nodes near coords style={font=\scriptsize}
]
\addplot[] table [x, y, col sep=space] {
1 79
2 111
3 5
4 20
5 158
};
\end{axis}
\end{tikzpicture}
\end{minipage}
\caption{Histograms of symmetry group sizes for different (D)QBFs}
\label{fig:ssize}
\end{figure}

More than half of the DQBF formulas ($190$) do not have any 
symmetries. The group size of $117$ formulas is between $2$ and $10$, 
indicating that a few literals can be exchanged safely. There are, 
however, $14$ formulas with huge symmetry groups, the largest having a 
size of \num{7.622442e30}. Ten of these formulas are from the 
\texttt{tentrup\_pec\_adder\_*} formula family and the other four formulas are from the 
\texttt{scholl\_*} formula family. 

When generating the symmetry breakers, 
it turns out that most of the symmetries of the \texttt{tentrup\_pec\_adder\_*} family 
concern universally quantified variables only. 
Consequently, in nine of theses formulas out of $33$ generators only one contributes 
to the symmetry breaker. 
Interestingly, even the addition of the small symmetry breaker constructed from this single
generator can have a severe influence on the runtime as shown in Table~\ref{tab:dqbfs}. 
In one of the formulas, none of the generators was eligible for the symmetry breaker construction.

The situation is different for the four formulas of the \texttt{scholl\_*} family. 
In two formulas, all of the eight generators contribute to the symmetry breaker, while in the other two $8$ out of $26$ and $26$ out of $55$ generators, respectively,  
contribute to the symmetry breaker. 
The symmetry breakers constructed from these generators considerably improve the runtimes as indicated by Table~\ref{tab:dqbfs}. 
In particular, the two formulas \texttt{C432\_*} become much easier for all solvers if symmetry breaking is applied. 

\begin{table}[b!]
\centering
\begin{tabular}{@{}l|rr|rr|rr|rr@{}}
&&&
\multicolumn{2}{c|}{ \dqbdd (s)} &
\multicolumn{2}{c|}{ \pedant (s)} &
\multicolumn{2}{c}{ \hqs (s)}\\[0.4em]
formula & {group size} & gens & w/o & w/ & w/o & w/& w/o &w/ \\
\midrule
\texttt{n\_bit\_10\_32} & \num{4.29e9} & 0/32 & 138.8 &  n.a. & 3.8 & n.a. & -- & n.a. \\
\texttt{n\_bit\_1\_19} & \num{8.58e9} & 1/33 & 1.7 & 2.0 & 3.7 & 3.8 & -- & 176.0 \\
\texttt{n\_bit\_1\_35} & \num{8.58e9} & 1/33 &  3.0 & 4.7 & 3.0 & 3.4 & 5.7 & 7.5 \\
\texttt{n\_bit\_2\_10} & \num{8.58e9} & 1/33 & 21.7 & 24.6 & 2.8 & 3.3 & -- & 50.8 \\
\texttt{n\_bit\_3\_1} & \num{8.58e9} & 1/33 &  65.7 & 184.5 & 2.4 & 2.5 & 78.5 & 26.5 \\
\texttt{n\_bit\_3\_38} & \num{8.58e9} & 1/33 &  8.3 & -- & 3.0 & 3.1 & -- & -- \\
\texttt{n\_bit\_4\_11} & \num{8.58e9} & 1/33 &  4.6 & 37.6 & 3.2 & 2.8 & 3.3 & 3.2 \\
\texttt{n\_bit\_4\_4} & \num{8.58e9} & 1/33 &  5.2 & 68.3 & 3.5 & 3.0 & 90.6 & -- \\
\texttt{n\_bit\_5\_90} & \num{8.58e9} & 1/33 &  29.2 & 4.6 & 3.4 & 3.2 & -- & -- \\
\texttt{n\_bit\_7\_26} & \num{8.58e9} & 1/33 &  69.9 & -- & 3.0 & 2.5 & 7.6 & 7.9 \\
\texttt{C499..0.10\_1.00\_4\_2} & \num{1.29e3} & 8/8 & 2.5 & 3.4 & 1.2 & 1.2 & 7.6 & 9.8 \\
\texttt{C499..10\_1.00\_9\_2} & \num{1.29e3} & 8/8 & 4.4 & 1.6 & 1.2 & 1.4 & 12.9 & -- \\
\texttt{C432..0.40\_0.20\_1\_2} & \num{1.50e9} & 8/26 & -- & 292.5 & -- & 0.6 & -- & -- \\
\texttt{C432..50\_0.50\_1\_3} & \num{7.6e30} & 26/55  & -- & 0.1 & -- & 0.4 & 15.7 & 0.1 \\
\end{tabular}
\caption{DQBFs with huge symmetry groups. 
The third column $x/y$ shows that $x$ out of $y$ generators (gens) can be used in the symmetry breaker construction. 
The runtimes (in seconds) of the three solvers \dqbdd, \pedant, and \hqs are shown with (w/) and without (w/o) symmetry breakers.
Timeouts (\SI{>600}{\second}) are indicated by ``--'', and ``n.a.'' indicates that no non-trivial symmetry breaker could be constructed. 
}
\label{tab:dqbfs}
\end{table}

Overall, for $127$ out of the $354$ DQBFs symmetry breakers could be generated. 
The runtime comparison of solving with and without symmetry breaking 
for the three DQBF solvers is shown on the left of Figure~\ref{fig:rt}. 
Points above the diagonal indicate that symmetry breaking is beneficial. 
Also here we see that it is difficult to predict whether symmetry breaking 
is helpful. In our experiments, however, we did not prune the symmetry 
breakers, i.e., we used all generators that were eligible. We conjecture
that a more selective generation of the symmetry breaker could reduce 
the overhead and reduce the structural loss that comes together with 
adding the symmetry breaker. 

\begin{figure}
	\includegraphics[width=0.49\textwidth]{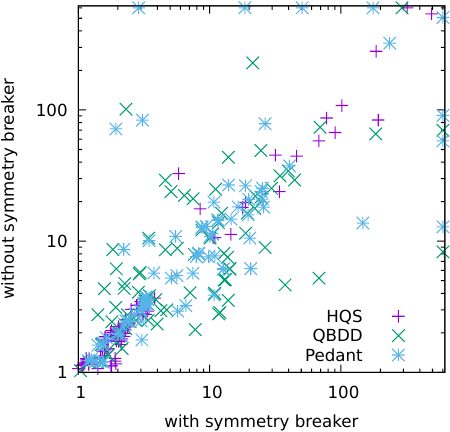}
	\includegraphics[width=0.49\textwidth]{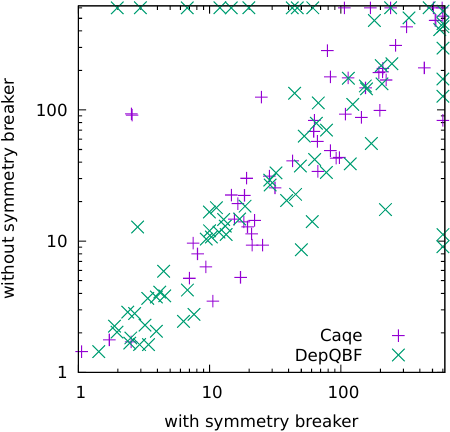}

\caption{Runtime comparison with and without symmetry breaking for 
	DQBF (left) and QBF (right). }
\label{fig:rt}
\end{figure}

For $350$ of the $377$ QBFs, the generation 
of the graph and the symmetry detection took less than $10$ seconds. 
For one formula, the symmetries could not be detected within a 
time limit of $15$ minutes, and for three formulas the graph 
became too large to be processed. One of these graphs had 
almost two billion edges, which can be explained as follows.
In the DQBF graph, the dependencies between variables are
 represented by edges between the variable nodes. In this case, 
 there were many universally quantified variables occurring to the 
 left of a huge last quantifier block. Therefore, it was necessary 
 to include an edge between each of these existential and universal 
 variables. 
 In order to get a more compact encoding, the different 
 quantifier blocks could be colored in different colors. This approach 
 would, however, only work for QBFs, but not for DQBFs. 
 The right side of 
 Figure~\ref{fig:ssize} shows some statistics on the group sizes of 
 the QBFs. Here, almost half of the instances have a lot of symmetries. 
In total, $286$ QBFs were enriched with symmetry breakers. The 
runtime comparison is shown on the right of Figure~\ref{fig:rt}. Like in 
the DQBF case, we get mixed results which require further investigation. 

\begin{table}[t]
\centering
\begin{tabular}{r|rr|rr|S|rr|rr}
& \multicolumn{2}{c|}{\#vars} & 
\multicolumn{2}{c|}{\#clauses} & & 
\multicolumn{2}{c|}{ \caqe (s)} &
\multicolumn{2}{c}{ \depqbf (s)} 
\\[0.4em]
$N$ & w/o & w/ 
& w/o &w/ 
& {group size}
& w/o &w/ 
& w/o &w/ \\
\midrule
\num{10} & \num{40} & \num{50} & \num{41} & \num{61} & \num{1.0e+03} &$< 1$ & $< 1$ & $< 1$ & $< 1$ \\
\num{20} & \num{80} & \num{100} & \num{81} & \num{121} & \num{1.0e+06} &93 & $< 1$ & 77 & $< 1$ \\
\num{40} & \num{160} & \num{200} & \num{161} & \num{241} & \num{1.1e+12} &-- & $< 1$ & -- & $< 1$ \\
\num{80} & \num{320} & \num{400} & \num{321} & \num{481} & \num{1.2e+24} &-- & $< 1$ & -- & $< 1$ \\
\num{160} & \num{640} & \num{800} & \num{641} & \num{961} & \num{1.5e+48} &-- & 1 & -- & $< 1$ \\
\num{320} & \num{1280} & \num{1600} & \num{1281} & \num{1921} & \num{2.1e+96} &-- & 6 & -- & $< 1$ \\
\num{640} & \num{2560} & \num{3200} & \num{2561} & \num{3841} & \num{4.6e+192} &-- & 49 & -- & $< 1$ \\
\num{1280} & \num{5120} & \num{6400} & \num{5121} & \num{7681} & \num{2.1e+385} &-- & 219 & -- & 8 \\
\num{2560} & \num{10240} & \num{12800} & \num{10241} & \num{15361} & \num{4.3e+770} &-- & -- & -- & 77 \\
\num{5120} & \num{20480} & \num{25600} & \num{20481} & \num{30721} & \num{1.9e+1541} &-- & -- & -- & 484 \\
\end{tabular}
\caption{Formulas of the \texttt{KBKF} formula family of size $N$ with (w/) and 
	without (w/o) symmetry breaker. Runtimes are given in seconds (s).
	Timeouts (\SI{>600}{\second}) are indicated by ``--''.}
\label{tab:kbkf}
\end{table}

As a final experiment, we consider two crafted formula families which 
have a lot of symmetries. The \texttt{KBKF} formulas by Kleine B\"uning et al.~\cite{KBKF}
were introduced to show that there exist false QBFs that do not 
have short resolution proofs. As search-based solvers 
like \depqbf are based on resolution, an exponential increase in 
runtime can be observed if no further simplification techniques are
enabled. These formulas are also hard for the solver \caqe, which 
is based on clausal abstraction. For both solvers, the formulas 
become easy if they are enriched with a symmetry breaker. Details are shown 
in Table~\ref{tab:kbkf}. Another formula family with many symmetries 
are the \texttt{parity} formulas. For \caqe, these formulas are also easy without 
symmetry breaker, but \depqbf can only solve them efficiently if 
they are enriched with a symmetry breaker. Details are shown in 
Table~\ref{tab:parity}.

\begin{table}[t]
\centering
\begin{tabular}{r|rr|rr|S|rr|rr}
& \multicolumn{2}{c|}{\#vars} & 
\multicolumn{2}{c|}{\#clauses} & & 
\multicolumn{2}{c|}{ \caqe (s)} &
\multicolumn{2}{c}{ \depqbf (s)} 
\\[0.4em]
$N$ & w/o & w/ 
& w/o &w/ 
& {group size}
& w/o &w/ 
& w/o &w/ \\
\midrule
\num{10} & \num{20} & \num{31} & \num{38} & \num{60} & \num{2.0e+03} &$< 1$ & $< 1$ & $< 1$ & $< 1$ \\
\num{20} & \num{40} & \num{61} & \num{78} & \num{120} & \num{2.1e+06} &$< 1$ & $< 1$ & 16 & $< 1$ \\
\num{40} & \num{80} & \num{121} & \num{158} & \num{240} & \num{2.2e+12} &$< 1$ & $< 1$ & 36 & $< 1$ \\
\num{80} & \num{160} & \num{241} & \num{318} & \num{480} & \num{2.4e+24} &$< 1$ & $< 1$ & -- & $< 1$ \\
\num{160} & \num{320} & \num{481} & \num{638} & \num{960} & \num{2.9e+48} &$< 1$ & $< 1$ & -- & $< 1$ \\
\num{320} & \num{640} & \num{961} & \num{1278} & \num{1920} & \num{4.3e+96} &$< 1$ & $< 1$ & -- & $< 1$ \\
\num{640} & \num{1280} & \num{1921} & \num{2558} & \num{3840} & \num{9.1e+192} &$< 1$ & $< 1$ & -- & $< 1$ \\
\num{1280} & \num{2560} & \num{3841} & \num{5118} & \num{7680} & \num{4.2e+385} &$< 1$ & $< 1$ & -- & $< 1$ \\
\num{2560} & \num{5120} & \num{7681} & \num{10238} & \num{15360} & \num{8.7e+770} &6 & 1 & -- & $< 1$ \\
\num{5120} & \num{10240} & \num{15361} & \num{20478} & \num{30720} & \num{3.8e+1541} &250 & 2 & -- & $< 1$ \\
\end{tabular}
\caption{Formulas of the \texttt{parity} formula family of size $N$ with (w/) and 
	without (w/o) symmetry breaker. Runtimes are given in seconds (s).
	Timeouts (\SI{>600}{\second}) are indicated by ``--''.
	}
\label{tab:parity}
\end{table}

\section{Conclusion}

With this work, we lay a solid theoretical foundation for the study of symmetries of DQBFs, which hopefully sparks further exploration and innovation in 
both (D)QBF theory and solver development.
Our experiments indicate that symmetry breaking can be beneficial for (D)QBF solving, at least in certain cases.
However, there is certainly a need for further investigation to better understand the practical effects of symmetry breaking in this setting. 

Based on the definition of symmetry breakers given in this paper,
there are many promising directions for future work.
For example, one could investigate different constructions for symmetry breakers~\cite{NW13} or try to lift more recent improvements in symmetry breaking from SAT~\cite{DBBD16} to DQBF.
An important open question is the detection of more complex symmetries that are not just permutations of literals.
Further, in this work, we focus solely on \emph{static} symmetry breaking, where a formula is extended by a symmetry breaker as a preprocessing step.
It could be beneficial to investigate also \emph{dynamic} symmetry breaking techniques~\cite{DBDDM12,DBB17}, which interfere directly in the solving process.
Another promising direction of future work could be to extend existing DQBF proof systems with symmetry rules, analogous to~\cite{Kri85,Urq99,KS18a}, 
and investigate their properties.
On the practical side, symmetries have recently been used in QBF model counting~\cite{PHHS25}. 
We anticipate a similar application in DQBF model counting.

\section*{Acknowledgements}

We thank the anonymous referees for their careful reading and helpful suggestions,
which helped to improve the presentation of this work a lot.

\end{document}